\documentclass{article} 
\usepackage{spconf}
\usepackage{pgfplots}
\usepgfplotslibrary{groupplots}
\pgfplotsset{compat=1.18}
\pgfplotsset{scaled y ticks=false}
\usepackage{tikz}
\usepackage{pgfplots}
\usepackage{pgfplotstable}
\usepackage{subcaption}
\usepackage{enumitem}
\usetikzlibrary{positioning}
\usetikzlibrary{calc}
\usepackage{graphicx}
\definecolor{OIblue}{RGB}{0,114,178}
\definecolor{OIorange}{RGB}{230,159,0}
\definecolor{OIgreen}{RGB}{0,158,115}
\definecolor{OIvermillion}{RGB}{213,94,0}
\definecolor{OIsky}{RGB}{86,180,233}
\definecolor{OIpurple}{RGB}{204,121,167}

\pgfplotsset{table/search path={..,../csv_files_for_plots,csv_files_for_plots,.}}

\usepackage{amsmath}

\usepackage{amssymb}
\usepackage{amsfonts}
\usepackage{dsfont}
\usepackage{accents}
\usepackage{bm}
\usepackage{hyperref}
\usepackage{bbm}
\usepackage{amsthm}

\usepackage{tabularx} 
\usepackage{caption}
\usepackage{multirow}
\usepackage{booktabs}
\usepackage{graphicx}
\graphicspath{{Figures/}}
 \usepackage{subcaption}
 \usepackage{tcolorbox}
 \usepackage{wasysym}
  \usepackage{longtable}

\usepackage{tikz,pgfplots}
\usepgfplotslibrary{groupplots}
\usetikzlibrary{intersections,calc,arrows,matrix,spy}
\usepackage{color}
\definecolor{darkred}{rgb}{0.6,0,0}
\definecolor{darkgreen}{rgb}{0,0.5,0}
\definecolor{darkblue}{rgb}{0,0,0.5}
\definecolor{SkyBlue}{rgb}{0.53, 0.81, 0.92}
\pgfplotsset{compat=1.5.1}

\usepackage{algorithm}
\usepackage{algpseudocode}

\usepackage[normalem]{ulem}
\newcommand{\stkout}[1]{\ifmmode\text{\sout{\ensuremath{#1}}}\else\sout{#1}\fi}

\def\R{\mathbb{R}}          									 	          
      \def\Am{\mathbf{A}}     									 	          
\def\N{\mathbb{N}}          									    	      
\def\E{\mathbb{E}}

\newcommand{\xb}{\mathrm{\mathbf{x}}}
\newcommand{\yb}{\mathrm{\mathbf{y}}}

\newcommand{\Ab}{\mathrm{\mathbf{A}}}

\newcommand{\mub}{\mathrm{\bm{\mu}}}

\newcommand{\Lc}{\mathcal{L}}

\newcommand{\Xc}{\mathcal{X}}

\usepackage{tcolorbox}
\usepackage{mdframed}
\usepackage{lipsum}

\newtheorem{remark}{Remark}[section]

\newtheorem{lemma}{Lemma}[section]
\newtheorem{proposition}[lemma]{Proposition}

\def\1{1\!\!1}

\newcommand{\EE}{\mathbb{E}}
\newcommand{\asym}{\asymp}
\graphicspath{{./images/}}

\title{Gaussian Surrogates for Poisson Imaging: \\ Some Theoretical and Empirical Results}
\name{Alexandra Spitzer*, Lorenzo Baldassari*, Valentin Derbanot$^\dagger$, Ivan Dokmanić*}
\address{*Department of Mathematics and Computer Science, University of Basel, 4051 Basel, Switzerland \\ $^\dagger$INSA‐Lyon, Universite Claude Bernard Lyon 1, CNRS, Inserm, CREATIS UMR 5220, U1294}

\begin{document}
	
\maketitle

\begin{abstract}
In imaging inverse problems with Poisson-distributed measurements, it is common to use objectives derived from the Poisson likelihood. But performance is often evaluated by mean squared error (MSE), which raises a practical question: how much does a Poisson objective matter for MSE, even at low dose?
We analyze the MSE of Poisson and Gaussian surrogate reconstruction objectives under Poisson noise. In a stylized diagonal model, we show that the unregularized Poisson maximum-likelihood estimator can incur large MSE at low dose, while Poisson MAP mitigates this instability through regularization. We then study two Gaussian surrogate objectives: a heteroscedastic quadratic objective motivated by the normal approximation of Poisson data, and a homoscedastic quadratic objective that yields a simple linear estimator. We show that both surrogates can achieve MSE comparable to  Poisson MAP in the low-dose regime, despite departing from the Poisson likelihood. Numerical computed tomography experiments indicate that these conclusions extend beyond the stylized setting of our theoretical analysis.

\end{abstract}
\begin{keywords}
Poisson inverse problems, low-dose imaging, maximum likelihood estimation, Richardson-Lucy / ML-EM, Gaussian surrogates, computed tomography.
\end{keywords}

\section{Introduction}
When photon or electron counts are low, the Poisson noise model is standard \cite{willett2007multiscale,bertero2009image,hohage2016inverse}  and it is natural to optimize a Poisson likelihood instead of a Gaussian quadratic data term for reconstruction \cite{fessler1995penalized,dey2006richardson,debarnot2019multiview}. But likelihood and reconstruction error are different objectives. Maximizing the correct likelihood does not guarantee minimal MSE---particularly in ill-posed problems where small eigenvalues amplify noise. This raises a practical question: how much MSE do we lose (or perhaps gain) by using objectives based on the Gaussian likelihood when the data are Poisson?
At high dose the Poisson distribution converges to a Gaussian, but we are interested in understanding what happens at low dose. We show that even in this regime, properly regularized Gaussian surrogates can be surprisingly competitive with Poisson objectives in terms of MSE. In a diagonal model that allows for closed-form analysis, we show that the unregularized Poisson MLE incurs large MSE at low dose: small diagonal entries push individual modes into an effective low-count regime where single-photon events create variance spikes (Section~\ref{sec:pois_mle}).

We then discuss two Gaussian surrogates. First, we consider a heteroscedastic objective motivated by the normal approximation of the Poisson  distribution. While our analysis shows that it can achieve smaller MSE than the Poisson MLE (Proposition~\ref{prop:per-mode-mse-hgmle}), it still retains some of the practical difficulties of Poisson modeling, since the log-likelihood remains non-quadratic. We then study a homoscedastic Gaussian surrogate (Section~\ref{sec:OG}) that departs entirely from Poisson statistics. We show that it yields provably smaller MSE in the same low-dose regime, and we compare it with Poisson maximum a posteriori (MAP) with Tikhonov regularization (Section~\ref{sec:pois_tik}). The homoscedastic surrogate is particularly attractive: its quadratic objective yields a linear estimator, making it both computationally simple and analytically tractable.

Numerical experiments on computed tomography (Section~\ref{sec:experiments}) confirm that these findings extend beyond the diagonal setting. We find that simple regularized quadratic solvers, including ordinary least squares, match Poisson MAP in MSE across all tested count levels. 

\subsection{Related works}
Poisson noise arises in photon-limited imaging modalities, including fluorescence deblurring, emission tomography, astronomical imaging, and denoising \cite{mcnally1999three,harmany2011spiral,krishnamurthy2010multiscale}.
The Poisson MLE is often computed using the Richardson–Lucy algorithm \cite{richardson1972bayesian,lucy1974iterative}, which corresponds to ML-EM in emission tomography \cite{shepp1982maximum,lange1984reconstruction}.
In practice, ML-EM reconstructions are regularized to mitigate variance explosion, for example via early stopping or explicit penalties \cite{veklerov2007stopping}. Poisson MAP is commonly tackled via one-step-late (OSL) MAP-EM updates \cite{green1990bayesian}. OSL is widely used in tomography, including with TV-type regularization \cite{panin1998total, Sawatzky}.

An alternative line of work relies on variance-stabilizing transformations, most notably the Anscombe transform \cite{anscombe1948transformation}, which approximates Poisson noise as additive Gaussian noise with nearly constant variance. 
These approximations lead to weighted least squares (WLS) and penalized WLS 
formulations \cite{curtis1975simple,stagliano2011analysis,singh2025learning,gribonval2021bayesian}, which are widely used in tomography.

\section{A stylized theoretical analysis}\label{SEC:METHODS}
We consider measurements $\yb= (y_1,\hdots, y_{m})\in\N^{m}$ modeled as independent Poisson counts
\begin{equation}\label{eq:obs_general}
y_j \sim \mathcal P\big((s\Am^{}\xb^{\star})_j\big),\qquad j=1,\ldots,m,
\end{equation}
where $\mathcal P(\lambda)$ denotes the Poisson distribution with parameter $\lambda$, $\Am:\R^{n} \to \R^{m}$ is a forward operator such that  $(\Am\xb^\star)_j\geq 0$ for all $j=1,\ldots,m$, $\xb^{\star}\in\R^{n}$ is the unknown signal, and $s>0$ is the dose. 
A key quantity in our analysis is the expected number of counts in measurement $j$:
\begin{equation}\label{eq:effective_counts_general}
\mu_j \;:=\; \mathbb{E} [y_j] = s\,(\Am \xb^{\star})_j,\qquad j=1,\ldots,m.
\end{equation}
In this paper, we focus on the low dose regime: $\mu_j\ll 1$.
We aim to characterize the mean-squared error (MSE),
\begin{equation}\label{eq:mse_target}
\mathrm{MSE}(\hat{\xb})
\;:=\;
\E\big\|\hat{\xb}-\xb^{\star} \big\|_2^2.
\end{equation}
In particular, we study how the MSE depends on the dose $s$,
the operator $\Am$,  and, importantly, the reconstruction strategy. 

\subsection{Poisson MLE}\label{sec:pois_mle}

We begin by considering the Poisson MLE to  estimate $\xb^{\star}$ in the low-dose regime ($\mu_j\ll 1$ for many $j$), a straightforward starting point since it is derived from the Poisson model:
\begin{equation}\label{eq:MLE_IP}
\xb_{\mathrm{MLE,P}} \; \in \; \arg\min_{\xb\in \mathcal X_+} \mathcal L_P(\xb;\yb),
\end{equation}
where $\Xc_+ := \{ \xb \in \mathbb R^n_+ :(\Am \xb)_j \geq 0, \, j=1,\ldots,m\}$ and
\begin{equation}\label{def:P-likelihood}
\mathcal L_P(\xb;\yb)
\; := \; \sum_{j=1}^{m} \left( s(\Am\xb)_j
- y_j\log\!\left(s(\Am\xb)_j\right) \right).
\end{equation}
We interpret $\mathcal L_P(\xb;\yb)$ as an extended-valued function on $\mathcal X_+$ by setting $\mathcal L_P(\xb;\yb)=+\infty$ whenever $s(\Am\xb)_j=0$ for some $j$ with $y_j>0$.
Moreover, we adopt the convention $0\log 0:=0$, i.e.,
$y_j\log\!\big(s(\Am\xb)_j\big):=0$ when $y_j=0$ and $s(\Am\xb)_j=0$.

To analyze the MSE of $\xb_{\mathrm{MLE, P}}$, we study a stylized model in which $\Am$ is linear and diagonal (and $m=n$). In the Gaussian setting, many linear inverse problems can be reduced to diagonal form without breaking Gaussianity. This is not true for the Poisson random variables, which are not preserved by a linear change of coordinates. Nonetheless, the diagonal model still provides useful insight, as it allows an explicit analysis of how the MSE depends on the dose  $s$ and the conditioning of $\Am$. We leave the much more involved non-diagonal case to future work.

We consider the setting where only the first $d \leq m$ components of the measurement vector are observed:\begin{equation}\label{eq:diag_obs}
y_j\sim\mathcal P(\mu_j),\qquad \mu _j:=s\,a_j\,x_j^{\star},\qquad j=1,\dots,d.
\end{equation}
The remaining $m-d$ components are unobserved. We also assume that $a_j > 0$ for all $j =1,\ldots, d$.

We interpret $d$ as the \emph{resolution} of the problem: for convolution operators, for example, diagonalization is achieved in the Fourier basis, and truncating to $d$ modes corresponds to a frequency cutoff, dictated, for example, by the sensor size; larger $d$ means a higher cutoff (larger effective bandwidth) and thus the ability to represent finer spatial scales.
The choice of $d$ will be important in our analysis, since it governs how the low-dose regime manifests across modes; we make this precise below.

Note that in the diagonal setting, the Poisson MLE admits an explicit closed-form expression
\begin{equation}\label{eq:pois_mle_diag}
\hat x_{\mathrm{MLE,P},j}=\frac{y_j}{s a_j},\qquad j\le d,
\end{equation}
and its MSE satisfies
\begin{equation}\label{eq:mse_biasvar}
\EE\|\hat \xb_{\mathrm{MLE,P}}-\xb^\star\|_2^2
=
\underbrace{\sum_{j=1}^d\frac{x_j^{\star}}{s a_j}}_{\text{estimation variance}}
\;+\;
\underbrace{\sum_{j=d+1}^m(x_j^\star)^2}_{\text{truncation bias}}.
\end{equation}
By making explicit the trade-off between estimation variance and truncation error, it highlights a key fact: small $s a_j$  can make the MSE variance-dominated. This leads to two remarks, in which the role of the resolution $d$ becomes explicit.

\textbf{Remark 1: ill-posedness.} 
In ill-posed inverse problems, one has $a_j\downarrow 0$ as the resolution increases. Hence, for large $d$, there are indices $j$ for which $s a_j$ is small even for moderate dose $s$, so the variance term in \eqref{eq:mse_biasvar} can dominate the MSE.

\textbf{Remark 2: MSE-based resolution choice.} For $s$ not too small, one can select $d$ by balancing the estimation-variance and truncation errors in \eqref{eq:mse_biasvar}; this yields a principled, dose-dependent notion of \emph{MSE-optimal resolution}.
Assume, for example, that the diagonal entries of $\Am$ and the signal coefficients $\xb^\star$ decay polynomially,
\begin{equation}\label{eq:poly_scales_methods}
a_j\asym j^{-\beta},\qquad x_j^\star\asym j^{-\alpha},
\qquad \beta>0,\ \alpha>\tfrac12.
\end{equation}
The condition $\alpha>\tfrac12$ ensures that  $\xb^\star \in \ell_2$ (and hence has finite energy) even as the dimension $m$ increases.
Balancing the two terms in \eqref{eq:mse_biasvar} yields the dose-dependent resolution
\begin{equation}\label{eq:d_of_s_methods}
d(s)\asym s^{1/(\alpha+\beta)}.
\end{equation}
Note that $d(s)$ also captures the onset of the low-dose regime: the index $j_\star(s)$ at which $\mu_j$ transitions from $\gg1$ to $\lesssim 1$ scales precisely as  $j_\star(s)\asym s^{1/(\alpha+\beta)}$.

This model reveals an important phenomenon for the MSE of the Poisson MLE: $d(s)$ increases with dose, but ill-posedness limits how far it can be pushed before the problem enters a low-dose regime where variance dominates. 
For extreme low-dose measurements $s \ll 1$, this cannot be avoided by tuning $d(s)$: the expected counts $\mu_j = s a_j x_j^\star\ll 1$ across all indices.
These considerations motivate the shrinkage strategies discussed next.

\subsection{Surrogate models}
In the low dose regime ($\mu_j\ll 1$), the probability of observing $y_j\geq 2$ is of order $\mu_j^2$. 
In particular, $y_j=0$ with high probability, while on the event $y_j=1$ the Poisson MLE produces a spike $\hat x_{\mathrm{MLE,P},j}=1/(s a_j)$ (see \eqref{eq:pois_mle_diag}).
These spikes can be large and dominate the total MSE as discussed above. This motivates reconstruction strategies that regularize the low-dose modes.
In what follows, we analyze two such strategies:
(i) a Poisson-likelihood objective with Tikhonov regularization, and (ii) its Gaussian counterpart.
We show that, in the low-dose regime, the Gaussian surrogate can achieve an explicit reduction in the MSE comparable to that of Poisson MAP---which may seem counterintuitive given that the measurements are Poisson distributed.

\subsubsection{Poisson MAP}\label{sec:pois_tik}

In practice, the Poisson MLE is \emph{rarely maximized to convergence}; instead, it is  implicitly regularized by early stopping an expectation–maximization algorithm. A similar effect can be achieved by explicit regularization. For example, the maximum a posteriori (MAP) estimator under an isotropic Gaussian prior results in Tikhonov regularization,
\begin{equation}\label{eq:pois_tikh}
\hat \xb_{\mathrm{Tik,P}} \in
\arg\min_{\xb\in \mathcal X_+}\Big\{
\Lc_P(\xb;\yb) + \frac{\tau}{2}\|\xb\|_2^2
\Big\},
\qquad \tau>0.
\end{equation}
In the diagonal model \eqref{eq:diag_obs}, Problem \eqref{eq:pois_tikh} decouples across coordinates; the solution has the closed form
\begin{equation}\label{eq:tik_closed_short}
\hat x_{\mathrm{Tik,P},j}(y_j)=\begin{cases}
\displaystyle 
\frac{-s a_j + \sqrt{(s a_j)^2 + 4\tau y_j}}{2\tau}& j\leq d,\\
\displaystyle 0, & j >d.
\end{cases} 
\end{equation}
The following proposition quantifies how Tikhonov regularization can mitigate the spikes that blow up the variance term in the MSE. Relative to the unregularized Poisson MLE, the MSE improves by a \emph{mode-dependent} factor.

\begin{proposition}\label{prop:per-mode-mse-pmle}
Define the \emph{effective regularization level}
\begin{equation}\label{eq:gamma_short}
\gamma_j:=\frac{\tau}{(s a_j)^2}, \qquad j\leq d.
\end{equation}
In the low-dose limit, as $\mu_j=s a_j x_j^\star\to 0$, the per-mode MSE ratio relative to the Poisson MLE satisfies
\begin{equation}\label{eq:tik_ratio_short}
\frac{\EE(\hat x_{\mathrm{Tik,P},j}-x_j^\star)^2}{\EE(\hat x_{\mathrm{MLE,P},j}-x_j^\star)^2}
=
\left(\frac{2}{1+\sqrt{1+4\gamma_j}}\right)^2+ O(\mu_j).
\end{equation}
The notation $f(\mu_j)= O(\mu_j)$ means that there exist  constants $C,\mu_0>0$ such that $|f(\mu_j)|\leq C\mu_j$ for all $0<\mu_j < \mu_0$.
\end{proposition}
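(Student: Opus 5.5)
Write $c:=s a_j$, $\mu:=\mu_j=c\,x_j^\star$ and $g(y):=\hat x_{\mathrm{Tik,P},j}(y)=\bigl(-c+\sqrt{c^2+4\tau y}\bigr)/(2\tau)$. The plan is to expand both per-mode MSEs directly in the Poisson probabilities $\Pr(y_j=k)=e^{-\mu}\mu^k/k!$. The denominator is explicit from \eqref{eq:mse_biasvar}: $\EE(\hat x_{\mathrm{MLE,P},j}-x_j^\star)^2=x_j^\star/c=\mu/c^2$. For the numerator we write $\EE(g(y)-x_j^\star)^2=\sum_{k\ge0}\Pr(y=k)\,(g(k)-x_j^\star)^2$. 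Since $g(0)=0$, the $k=0$ term is $e^{-\mu}(x_j^\star)^2=e^{-\mu}\mu^2/c^2$, which is $O(\mu^2/c^2)$. The $k=1$ term is $\mu e^{-\mu}(g(1)-x_j^\star)^2$.

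The key computation is $g(1)$. In terms of $\gamma_j=\tau/c^2$,
\[
g(1)=\frac{c\bigl(\sqrt{1+4\gamma_j}-1\bigr)}{2\tau}=\frac{1}{c}\cdot\frac{\sqrt{1+4\gamma_j}-1}{2\gamma_j}=\frac{1}{c}\cdot\frac{2}{1+\sqrt{1+4\gamma_j}},
\]
which follows by rationalizing. Hence $c\,g(1)=\rho_j:=2/(1+\sqrt{1+4\gamma_j})\in(0,1]$. Since $x_j^\star=\mu/c$, the $k=1$ term equals
\[
\frac{\mu e^{-\mu}}{c^2}(\rho_j-\mu)^2=\frac{\mu}{c^2}\rho_j^2+O\!\left(\frac{\mu^2}{c^2}\right).
\]
Dividing by $\mu/c^2$ gives the leading term $\rho_j^2$ claimed in \eqref{eq:tik_ratio_short}.

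It remains to show that the tail $k\ge2$ contributes $O(\mu^2/c^2)$, so that after division it is $O(\mu)$. I expect this to be the main obstacle: the bounds must be uniform and must be expressed in the scaling $1/c^2$, so that $c$ and $\tau$ cancel in the ratio. The tool is concavity of $\sqrt{\cdot}$, which gives $\sqrt{c^2+4\tau k}\le c+2\tau k/c$ and hence $0\le g(k)\le k/c$ (the Tikhonov estimate never exceeds the MLE). Therefore
\[
(g(k)-x_j^\star)^2\le 2\,\frac{k^2+\mu^2}{c^2},
\]
and
\[
\sum_{k\ge2}\Pr(y=k)\,\frac{2(k^2+\mu^2)}{c^2}\le\frac{2}{c^2}\Bigl(\EE[y^2\mathbf 1_{y\ge2}]+\mu^2\Bigr)\le\frac{C\mu^2}{c^2}
\]
for $\mu<\mu_0$. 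This uses $\EE[y^2\mathbf 1_{y\ge 2}]=\mu+\mu^2-\mu e^{-\mu}=O(\mu^2)$.

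Collecting the three pieces, the ratio equals $\rho_j^2+O(\mu)$. The constants are independent of $\gamma_j$ because $\rho_j\le1$, which gives the statement.
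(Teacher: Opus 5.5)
Your proof is correct and follows essentially the same route as the paper. Both split the expectation into $k=0$, $k=1$, $k\ge 2$, identify $s a_j\,\hat x_{\mathrm{Tik,P},j}(1)=2/(1+\sqrt{1+4\gamma_j})$, and bound the tail via $\hat x_{\mathrm{Tik,P},j}(k)\le k/(s a_j)$ together with $(a-b)^2\le 2a^2+2b^2$; the only differences are cosmetic, namely that you compute $\EE[y^2\mathbf 1_{\{y\ge 2\}}]$ exactly rather than through $k^2\le 2k(k-1)$, and you substitute $x_j^\star=\mu_j/(s a_j)$ from the start, which makes the uniformity of the constants in $\gamma_j$ a bit more transparent.
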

A similar expression for the \emph{global} MSE ratio (under a uniform low-dose condition over resolved modes) is provided in the Supplementary Materials~\ref{sec:global-mse-tik}.

Note that the improvement factor in \eqref{eq:tik_ratio_short} depends on $\gamma_j=\tau/(s a_j)^2$:
as $a_j\downarrow 0$, $\gamma_j$ increases and the shrinkage becomes stronger. This is consistent with the role of Tikhonov regularization: it shrinks most strongly the modes that would otherwise exhibit the largest low-dose variance growth.

\subsubsection{Homoscedastic Gaussian MAP}\label{sec:OG}
Poisson objectives can be difficult to optimize and analyze, because the log-likelihood includes the term $\log \big(s(\Am \xb)_j\big)$. It is therefore natural to ask whether simple quadratic losses can yield similar or even better MSE.
In high-dose regimes, Gaussian surrogates are motivated by the normal approximation: if $X\sim\mathcal P(\lambda)$, then as $\lambda\to\infty$, $(X-\lambda)/\sqrt{\lambda}$ converges in distribution to $\mathcal N(0,1)$ \cite{stagliano2011analysis}.
Applied to \eqref{eq:obs_general}, this suggests using the \emph{heteroscedastic} (non-constant variance across measurements) Gaussian approximation
$y_j \approx \mathcal N\!\big(s(\Am\xb^\star)_j,\; s(\Am\xb^\star)_j\big)$, which one might hope to use even at low dose.
This choice, while principled, does not eliminate all Poisson-specific difficulties: the associated negative log-likelihood is still non-quadratic, with logarithmic and reciprocal terms in $s(\Am\xb)_j$, and is therefore less straightforward to handle computationally and analytically. In Supplementary Materials~\ref{sec:HG}, we analyze this model and show that, despite being a Gaussian approximation of the Poisson distribution, it can achieve a smaller MSE than the Poisson MLE. This analysis holds in the low-dose regime and under the stylized diagonal acquisition model.

Motivated by these difficulties, practitioners often resort to WLS, which yields a quadratic objective by simplifying the heteroscedastic model. Here we consider the simplest instance of WLS, corresponding to  the \emph{homoscedastic} Gaussian model
\begin{equation}\label{eq:homo_gauss_surr}
y_j \sim \mathcal N\!\big(s(\Am\xb^\star)_j,\; 1\big),
\qquad j=1,\ldots,m.
\end{equation}
Although this model is not distributionally consistent with \eqref{eq:obs_general}, because the variance does not depend on $s(\Am \xb)_j$, we show that it can still yield good MSE performance in the low-dose regime.
More precisely, we study the MAP estimator with  Tikhonov regularization:
\begin{equation} \label{eq:OG_short}
\begin{aligned}
\hat \xb_{\mathrm{Tik,G}}\; :=\; \arg\min_{\xb\in \mathcal X_+} \left\{ \sum_{j=1}^d \frac12\big(y_j-s (\Am  \xb)_j\big)^2 + \frac{\tau}{2} \|\xb\|_2^2 \right\}.
\end{aligned}
\end{equation} 
In the diagonal model \eqref{eq:diag_obs}, this estimator has a closed-form:
\begin{equation}\label{eq:OG_closed_short}
\hat x_{\mathrm{Tik,G},j}(y_j)
=
\begin{cases}
\displaystyle \frac{s a_j}{(s a_j)^2+\tau}\,y_j& j\leq d,\\
\displaystyle 0, & j >d.
\end{cases} \end{equation}
The next proposition characterizes its low-dose MSE  and compares it to that of the unregularized Poisson MLE. 
 
\begin{proposition}\label{prop:per-mode-mse-ogmle}
The per-mode MSE ratio relative to the Poisson MLE satisfies
\begin{equation}\label{eq:OG_ratio_short}
\frac{\mathbb E(\hat x_{\mathrm{Tik,G},j}-x^\star_j)^2}{\mathbb E(\hat x_{\mathrm{MLE,P},j}-x^\star_j)^2}
=
\Big(\frac{1}{1+\gamma_j}\Big)^2
+
\Big(\frac{\gamma_j}{1+\gamma_j}\Big)^2\mu_j,
\end{equation}
where $\gamma_j$ is defined in \eqref{eq:gamma_short}.
\end{proposition}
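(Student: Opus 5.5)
Because the estimator \eqref{eq:OG_closed_short} is linear in $y_j$, the per-mode MSE is an exact bias--variance computation and the ratio \eqref{eq:OG_ratio_short} will hold exactly, with no asymptotics needed. I will fix $j\le d$ and write $c_j := \frac{s a_j}{(s a_j)^2+\tau}$, so that $\hat x_{\mathrm{Tik,G},j}=c_j y_j$. I will also use the two Poisson moments $\EE y_j=\mu_j$ and $\mathrm{Var}(y_j)=\mu_j$, where $\mu_j=s a_j x_j^\star$.

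The first step is to rewrite $c_j$ through $\gamma_j$. Dividing numerator and denominator by $(s a_j)^2$ gives $c_j=\frac{1}{s a_j}\cdot\frac{1}{1+\gamma_j}$. The denominator of \eqref{eq:OG_ratio_short} is already known from \eqref{eq:mse_biasvar}: $\EE(\hat x_{\mathrm{MLE,P},j}-x_j^\star)^2=\frac{x_j^\star}{s a_j}=\frac{\mu_j}{(s a_j)^2}$.

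The second step is the decomposition $\EE(c_j y_j-x_j^\star)^2=c_j^2\,\mathrm{Var}(y_j)+(c_j\mu_j-x_j^\star)^2$. For the variance term,
\[
c_j^2\mu_j=\frac{\mu_j}{(s a_j)^2}\Big(\frac{1}{1+\gamma_j}\Big)^2 .
\]
For the bias term, note that $c_j\mu_j=\frac{x_j^\star}{1+\gamma_j}$. Hence the bias equals $-x_j^\star\frac{\gamma_j}{1+\gamma_j}$, and its square is $(x_j^\star)^2\big(\frac{\gamma_j}{1+\gamma_j}\big)^2$.

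The third step is to divide by $\frac{\mu_j}{(s a_j)^2}$. The variance term gives exactly $\big(\frac{1}{1+\gamma_j}\big)^2$. For the bias term we have $(x_j^\star)^2(s a_j)^2/\mu_j=\mu_j$, since $(x_j^\star)^2(s a_j)^2=\mu_j^2$, which yields $\big(\frac{\gamma_j}{1+\gamma_j}\big)^2\mu_j$. Summing the two pieces gives \eqref{eq:OG_ratio_short}.

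There is no real obstacle here. The only point to check is the constraint $\xb\in\Xc_+$ in \eqref{eq:OG_short}: the unconstrained minimizer $c_j y_j$ is nonnegative because $y_j\ge 0$ and $c_j>0$, so \eqref{eq:OG_closed_short} is indeed the constrained solution. We also need $x_j^\star>0$ so that the denominator is nonzero; this holds whenever $\mu_j>0$, which is implicit in the statement.
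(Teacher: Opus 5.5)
Your proof is correct and matches the paper's argument: it writes $\hat x_{\mathrm{Tik,G},j}=h_j y_j$ with $h_j=s a_j/((s a_j)^2+\tau)$, computes the exact bias--variance decomposition $(1-s a_j h_j)^2 x_j^{\star 2}+h_j^2\mu_j$, divides by $\mu_j/(s a_j)^2$, and uses $s a_j h_j=1/(1+\gamma_j)$, so the identity holds exactly with no asymptotics. Your check that the unconstrained minimizer already lies in $\mathcal{X}_+$ is a small point the paper leaves implicit.
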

A similar expression for the \emph{global} MSE ratio (under a uniform low-dose condition over resolved modes) is provided in Supplementary Materials~\ref{subsubsec:global_mse}.

Equation~\eqref{eq:OG_ratio_short}  shows that, in the low-dose regime, i.e. $\mu_j = s a_j x_j^\star \to 0$, the homoscedastic estimator reduces the per-mode MSE of the unregularized Poisson MLE by the explicit factor $(1+\gamma_j)^{-2}<1$ in leading order. Interestingly, this leading-order constant is smaller than the corresponding one for Poisson MAP in \eqref{eq:tik_ratio_short}, meaning that homoscedastic MAP shrinks the per-mode MSE more than Poisson MAP. This provides a positive answer to the question we asked in the Introduction. Indeed, at least in this stylized diagonal setting, a simple Gaussian objective can improve MSE in low-count regimes without requiring a Poisson-specific solver; moreover, since the model is a constant-variance Gaussian, it results in a \textit{linear} estimator.

\section{Experiments}\label{sec:experiments}
We now explore numerically whether these conclusions extend to a more general problem of 2D parallel-beam CT.

\subsection{Experimental setting}
We reconstruct a ground-truth image $\xb^\star \in \Xc_+$ supported on the inscribed circle $\mathcal{C}$ (see Fig.~\ref{fig:qualitative}), with $n=256\times 256$. 
We simulate 2D parallel-beam CT with 180 equispaced angles in $[0,\pi)$ using ASTRA-cuda \cite{VANAARLE201535}.
The noisy data are generated by $\yb \sim \mathcal{P}(s\ \Ab\xb^\star)$. 
The scale $s$ is chosen so that the average expected count per detector bin equals a target value $c$.

\begin{figure*}[h!]
\centering
\begin{tikzpicture}[
  every node/.style={inner sep=0pt},
  title/.style={font=\footnotesize, inner sep=1pt}
]

\def\W{0.18\textwidth}   
\def\H{0.18\textwidth}  
\def\X{2.2mm}           
\def\Y{4.0mm}

\node (r2c1)  {\includegraphics[width=\W,height=\H]{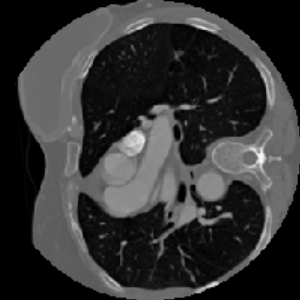}};
\node (r2c2) [right=\X of r2c1] {\includegraphics[width=\W,height=\H]{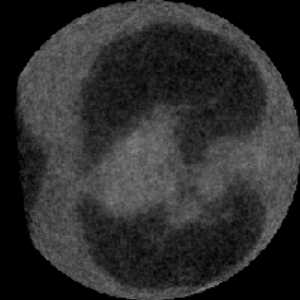}};
\node (r2c3) [right=\X of r2c2] {\includegraphics[width=\W,height=\H]{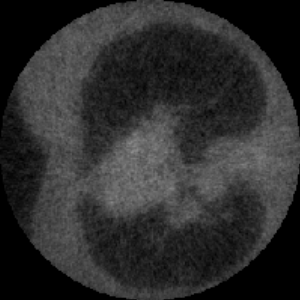}};
\node (r2c4) [right=\X of r2c3] {\includegraphics[width=\W,height=\H]{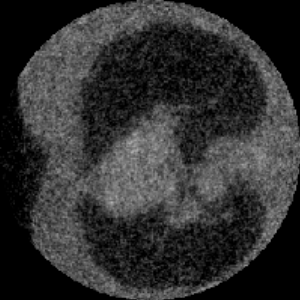}};
\node (r2c5) [right=\X of r2c4] {\includegraphics[width=\W,height=\H]{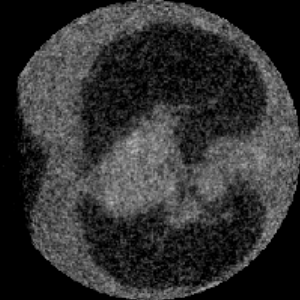}};

=======================
Row 3 (no titles)
  Replace filenames with your 5 images
=======================
\node (r3c1) [below=\Y of r2c1] {\includegraphics[width=\W,height=\H]{csv_files_for_plots/lodo/gt.pdf}};
\node (r3c2) [right=\X of r3c1] {\includegraphics[width=\W,height=\H]{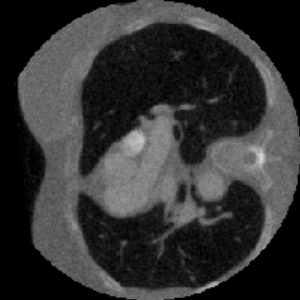}};
\node (r3c3) [right=\X of r3c2] {\includegraphics[width=\W,height=\H]{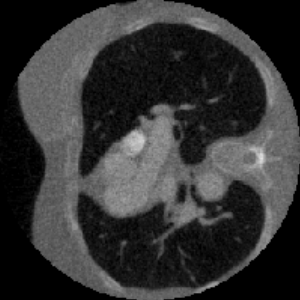}};
\node (r3c4) [right=\X of r3c3] {\includegraphics[width=\W,height=\H]{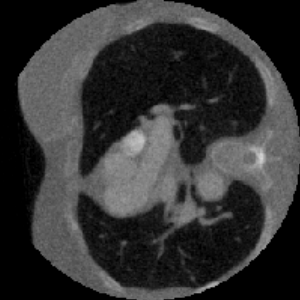}};
\node (r3c5) [right=\X of r3c4] {\includegraphics[width=\W,height=\H]{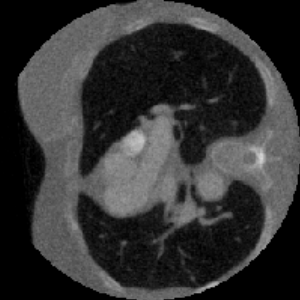}};

\node[title, rotate=90] at (-2.2,-0) {Average expected} ;
\node[title, rotate=90] at (-1.9,-0) {number of counts $c$=10} ;
\node[title, rotate=90] at (-2.2,-3.6) {Average expected} ;
\node[title, rotate=90] at (-1.9,-3.6) {number of counts $c$=1000} ;
\node[title, above=0.5mm of r2c1] {Ground-truth};
\node[title, above=0.5mm of r2c2] {Regularized HG MAP};
\node[title, above=0.5mm of r2c3] {Homoscedastic LS};
\node[title, above=0.5mm of r2c4] {PWLS (oracle)};
\node[title, above=0.5mm of r2c5] {Poisson MAP};

\end{tikzpicture}
\caption{Qualitative reconstructions for a representative CT slice from the LoDoPaB dataset at two count levels.
From left to right: ground truth and reconstructions obtained with regularized HG MAP, homoscedastic LS, PWLS (oracle weights), and Poisson MAP.}
\label{fig:qualitative}
\end{figure*}

We compare three reconstruction objectives: (1) \emph{Poisson MAP} \eqref{eq:pois_tikh} which we optimize using the one-step-late MAP-EM update~\cite{green1990bayesian}; (2) \emph{regularized HG MAP} \eqref{eq:HG_short} with an additional Tikhonov prior $\tfrac{\tau}{2}\|\xb\|^2$, optimized using L-BFGS-B with box constraints implementing $\xb\in\mathcal{C}$; and (3) \emph{penalized weighted least squares (PWLS)}, optimized using L-BFGS-B with box constraints implementing $\xb\in\mathcal{C}$. PWLS solves 
\begin{equation}\label{eq:pwls_exp}
    \arg\min_{\xb\in\mathcal{C}} \frac12\sum_{j\in\mathcal{V}} w_j\big(\mu_j(\xb)-y_j\big)^2 + \frac{\tau}{2}\|\xb\|_2^2,
\end{equation}
where $\mub(\xb)=s\Ab\xb$. We use three settings for the weights: \emph{PWLS (oracle)} with $w_j = (\mu_j(\xb^{\star}) + \varepsilon)^{-1}$, \emph{PWLS (plug-in)} with $w_j = (y_j+\varepsilon)^{-1}$, and \emph{PWLS (plug-in-FBP)} with $w_j = (\mu_j(\hat \xb_{\mathrm{FBP}}) + \varepsilon)^{-1}$. We also include \emph{homoscedastic LS} as the special case $w_j \equiv 1$. The oracle weights use the true variance and serve as the best-case scenario for fixed-weight PWLS.
We exclude the standard Poisson MLE which performs poorly at convergence. 

We run each solver to numerical convergence---avoiding early stopping as an implicit regularizer---and initialize all methods with a uniform, count-matched image (positive constant on $\mathcal{C}$, zero outside). For each method and dose level $s$, we choose the Tikhonov weight $\tau$ by minimizing tuning-set MSE after running the solver to convergence.  In the Supplementary Materials~\ref{app:u_shaped}, we report U-shaped MSE versus regularization strength $\tau$ curves.

Finally, reconstruction quality is measured by MSE on the circular field of view (FOV). We report the mean of this metric over independent test instances, as specified below for our main benchmark\footnote{Code is available at GitHub: \href{https://github.com/AlexandraSpitzer/tomo-reconstruction-gaussian-poisson}{https://github.com/AlexandraSpitzer/tomo-reconstruction-gaussian-poisson}.}

\def\ps{0.45}
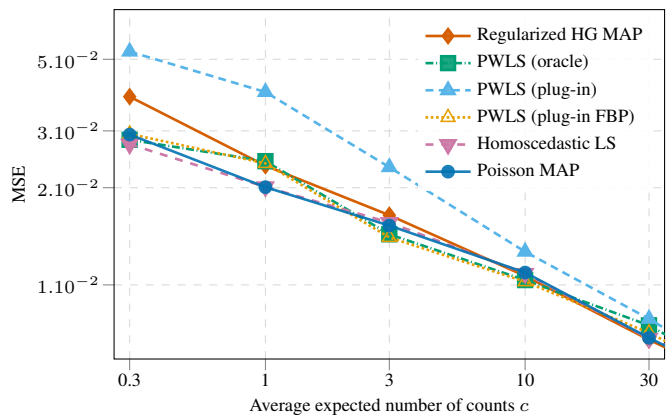
\begin{figure}[h!]
\centering
    \centering
    \begin{tikzpicture}

    \pgfplotsset{
      every axis/.append style={
        tick label style={font=\scriptsize},
        label style={font=\scriptsize},
        xlabel style={yshift=2pt},
        ylabel style={yshift=-1pt},
        ticklabel style={inner sep=1pt},
        enlarge x limits=0.03,
        enlarge y limits=0.08,
        clip mode=individual,
      }
    }

    \begin{groupplot}[
      group style={
        group size=2 by 1,
        horizontal sep=0.85cm,
        vertical sep=0.42cm
      },
      every axis plot/.append style={mark options={solid} 
      },
      width=0.5\textwidth,
      height=0.35\textwidth,
      xmode=log,
      log basis x=10,
      grid=both,
      major grid style={dashed,gray!30},
      minor grid style={dotted,gray!10},
      minor tick num=9,
      xtick={0.3,1,3,10,30,100,300,1000},
      xticklabels={0.3,1,3,10,30,100,300,1000},
      xtick pos=lower,
      ytick pos=left,
      tick align=outside,
      xlabel near ticks,
      ylabel near ticks,
      xlabel={\scriptsize Average expected number of counts $c$},
      ymode=log,
      ymin=7*1e-3,
      ymax=6e-2,
      ytick={0.01,0.02,0.03,0.05},
      yticklabels={$1.10^{-2}$,$2.10^{-2}$,$3.10^{-2}$,$5.10^{-2}$},
      ytickten={-3,-2,-1},
      legend cell align={left},
      legend style={
      at={(axis description cs:0.98,0.98)},
      anchor=north east,
      font=\scriptsize,
      draw=none,
      fill=white,
      fill opacity=0.9,
      text opacity=1,
      inner sep=2pt,
    }
    ]

\nextgroupplot[
  ylabel={\scriptsize MSE},
  title style={
    at={(rel axis cs:0.98,0.90)},
    anchor=north east,
    font=\normalsize,
    inner sep=1pt,
  },
]

\addplot+[OIvermillion, solid, line width=1.0pt,
          mark=diamond*, mark size=2.6pt,
          mark options={solid, draw=OIvermillion, fill=OIvermillion}]
  table[x=target_c,y=HG, col sep=comma]{csv_files_for_plots/lodo/mse_vs_tc-lodo.csv};
\addlegendentry{Regularized HG MAP}

\addplot+[OIgreen, densely dashdotted, line width=1.0pt,
          mark=square*, mark size=2.6pt,
          mark options={solid, draw=OIgreen, fill=OIgreen}]
  table[x=target_c,y={WLS (oracle)}, col sep=comma]{csv_files_for_plots/lodo/mse_vs_tc-lodo.csv};
\addlegendentry{PWLS (oracle)}

\addplot+[OIsky, densely dashed, line width=1.0pt,
          mark=triangle*, mark size=2.8pt,
          mark options={solid, draw=OIsky, fill=OIsky}]
  table[x=target_c,y={WLS (plug-in)}, col sep=comma]{csv_files_for_plots/lodo/mse_vs_tc-lodo.csv};
\addlegendentry{PWLS (plug-in)}

\addplot+[OIorange, densely dotted, line width=1.0pt,
          mark=triangle, mark size=3.0pt,
          mark options={solid, draw=OIorange, fill=white, line width=0.6pt}]
  table[x=target_c,y={WLS (plug-in FBP)}, col sep=comma]{csv_files_for_plots/lodo/mse_vs_tc-lodo.csv};
\addlegendentry{PWLS (plug-in FBP)}

\addplot+[OIpurple, dashed, line width=1.0pt,
          mark=triangle*, mark size=3.2pt,
          mark options={solid, rotate=180, draw=OIpurple, fill=OIpurple}]
  table[x=target_c,y=Homoscedastic, col sep=comma]{csv_files_for_plots/lodo/mse_vs_tc-lodo.csv};
\addlegendentry{Homoscedastic LS}

\addplot+[OIblue, solid, line width=1.0pt,
          mark=*, mark size=2.1pt,
            opacity=0.9,
          mark options={solid, draw=OIblue, fill=OIblue}]
  table[x=target_c,y=Poisson, col sep=comma]{csv_files_for_plots/lodo/mse_vs_tc-lodo.csv};
\addlegendentry{Poisson MAP}
\end{groupplot}
\end{tikzpicture}
\caption{MSE as a function of the average expected number of counts per detector bin. For HG and all PWLS variants, the stabilization floor $\varepsilon$ was selected from $\{0.1,\,0.5,\,1.0\}$ by minimizing the tuning-set MSE at the lowest count level; the chosen $\varepsilon$ was then fixed and used for all average expected number of counts.}
\label{fig:lodopab}
\end{figure}

\subsection{LoDoPaB-CT slices}\label{sec:lodo_exp}
We start with the LoDoPaB-CT dataset \cite{Leuschner2021}, which provides clinical CT slices with corresponding ground-truth reconstructions. This allows us to report performance statistics across many different images. 
We randomly selected 60 patients. Ten disjoint slices were used for tuning and fifty for testing. The slices were downsampled using bicubic interpolation and rescaled to $[0, 1]$ using per-slice min–max normalization. 
For each slice and dose $s$, we form the expected sinogram $\mub=s\Ab\xb^\star$, and draw a single Poisson realization.

Fig.~\ref{fig:qualitative} shows reconstructions at two count levels.
For a relatively small number of counts (mean expected count by detector bin is $10$), the reconstruction obtained with homoscedastic LS or regularized HG MAP is better than Poisson MAP. As shown in Fig.~\ref{fig:lodopab}, the homoscedastic LS and the Poisson MAP give almost the same MSE, despite visual differences. 
At a larger expected number of counts (mean expected count by detector bin is $1000$), anatomical structure becomes more visually coherent for all methods, and there is little or no perceptual difference.

Fig.~\ref{fig:lodopab} reports MSE versus the average expected number of counts. All methods improve monotonically with higher count levels, and the performance gaps vanish in the moderate-to-high count regime. 
In the low count regime, \emph{Poisson MAP is not consistently better in MSE}. Homoscedastic LS (Section~\ref{sec:OG}) as well as regularized HG MAP and PWLS, all match closely the MSE of Poisson MAP, and sometimes slightly outperform it.
This is consistent with the analysis: low-count regularization and implicit shrinkage can be more important for MSE than exact likelihood matching.

In the Supplementary Materials (Section~\ref{app:MSE-shepp-logan}) we additionally report results with the fixed ground truth (the Shepp-Logan phantom) where MSE differences only come from noise realizations and different reconstruction methods. Findings align with those on  LoDoPaB.

\section{Conclusion}
Our analysis shows a counterintuitive result: when reconstructing from Poisson-distributed measurements, the choice between Poisson and Gaussian likelihood objectives matters less for MSE than proper regularization---even at very low dose where one would expect Gaussian approximations to fail. The theoretical analysis, while conducted in a stylized setting (a diagonal forward operator), still clarifies the underlying mechanism: in ill-posed inverse problems, the forward operator attenuates fine-scale modes, pushing them into an effective low-dose regime regardless of the overall photon or electron count. In this regime, rare single-count events can produce large variance spikes that dominate MSE, and regularization acts primarily by damping these modes.
Computed tomography experiments confirm that this behavior extends beyond the diagonal model: across all tested count levels, simple quadratic objectives such as ordinary least squares match or slightly exceed Poisson MAP in MSE. 

While the experimental results align with the theory, we stress that our analysis is intentionally stylized; extending it beyond the diagonal setting considered here would require much greater effort. Moreover, our conclusions are limited to classical likelihood-based estimators and simple Tikhonov-type regularization; deep learning methods based on strong spatial priors may not follow the same conclusions as the algorithms studied in this paper.

\section{Acknowledgments}
The authors would like to thank Jeremy Cohen, Voichita Maxim and Thibaut Modrzyk for valuable discussions regarding Poisson inverse problems.
V.D. is supported by the Agence National de la Recherche (ANR) and the Ministère de l'Enseignement Supérieur et de la Recherche.
Calculations were performed at sciCORE (\href{https://scicore.unibas.ch/}{https://scicore.unibas.ch/}).

\bibliographystyle{IEEEbib}
\bibliography{refs}

\clearpage
\newpage
\onecolumn
\appendix
\begin{center}
    \section*{Supplementary Materials for ``Gaussian Surrogates for Poisson Imaging: \\ Some Theoretical and Empirical Results''}
\end{center}

\section{Proofs of Section~\ref{SEC:METHODS}}

\subsection{Proof of Proposition~\ref{prop:per-mode-mse-pmle}}
\subsubsection{Per-mode Poisson MAP expression}
Recall the standard bias-variance decomposition over modes:
\[
\mathbb E\big\| \hat{\xb}_{{\mathrm{Tik,P}}}(\yb)-\xb^\star \big\|^2
=
\sum_{i=1}^d \mathbb E \big(\hat x_{\mathrm{Tik,P},i}(y_i)-x^\star_i\big)^2
\;+\;
\sum_{i>d}x_i^{\star2},
\]
where, for each $i$,
\[
\mathbb E \big(\hat x_{\mathrm{Tik,P},i}(y_i)-x^\star_i\big)^2
=
\sum_{k=0}^\infty \big( \hat{x}_{\mathrm{Tik,P},i}(k)-x^\star_i\big)^2\Pr(y_i=k),
\]
and
\[
\Pr(y_i=k)=e^{-\mu_i}\frac{\mu_i^k}{k!},
\qquad
\mu_i:=s a_i x^\star_i.
\]
When $\mu_i\to 0$ (low dose for mode $i$), we have
\[
\begin{aligned}
&\Pr(y_i=0)=e^{-\mu_i}=1-\mu_i+O(\mu_i^2), \qquad \Pr(y_i=1)=\mu_i e^{-\mu_i}=\mu_i+O(\mu_i^2),
\end{aligned}
\]
and hence $\Pr(y_i\ge 2)=O(\mu_i^2)$.
Set
\[
h_i := \hat x_{\mathrm{Tik,P},i}(1)=\frac{2}{s a_i+\sqrt{(s a_i)^2+4\tau}}.
\]
Splitting the expectation into $k=0$, $k=1$ and $k\ge 2$ gives
\[
\begin{aligned}
\mathbb E(\hat x_{\mathrm{Tik,P},i}(y_i)-x^\star_i)^2 =
x_i^{\star 2}\,\Pr(y_i=0)
+
(h_i-x^\star_i)^2\,\Pr(y_i=1)
+
R_i,
\end{aligned}
\]
where
\[
R_i := \sum_{k\ge 2} \big(\hat x_{\mathrm{Tik,P},i}(k)-x^\star_i\big)^2\Pr(y_i=k).
\]
We bound $R_i$ explicitly. For all $k\geq 2$, we have
\[
\hat x_{\mathrm{Tik,P},i}(k)=\frac{2k}{s a_i+\sqrt{(s a_i)^2+4\tau k}}\le \frac{k}{s a_i},
\]
and it follows that
\[
(\hat x_{\mathrm{Tik,P},i}(k)-x^\star_i)^2 \le 2\Big(\frac{k}{s a_i}\Big)^2 + 2x_i^{\star 2}.
\]
Therefore,
\[
R_i \le \frac{2}{(s a_i)^2}\,\mathbb E\!\big[y_i^2\mathbf 1_{\{y_i\ge 2\}}\big] \;+\; 2x_i^{\star 2}\,\Pr(y_i\ge 2).
\]
For integers $k\ge 2$, one has $k^2 \le 2k(k-1)$, hence
\[
y_i^2\mathbf 1_{\{y_i\ge 2\}} \le 2y_i(y_i-1).
\]
Taking expectations and using $\mathbb E[y_i(y_i-1)]=\mu_i^2$, it yields
\[
\mathbb E\!\big[y_i^2\mathbf 1_{\{y_i\ge 2\}}\big]\le 2\mu_i^2.
\]
Moreover, since $y_i(y_i-1)\ge 2\,\mathbf 1_{\{y_i\ge 2\}}$,
\[
\Pr(y_i\ge 2)\le \frac{\mathbb E[y_i(y_i-1)]}{2}=\frac{\mu_i^2}{2}.
\]
Consequently,
\[
R_i \le \frac{4\mu_i^2}{(s a_i)^2} + x_i^{\star 2}\mu_i^2.
\]
Using $\Pr(y_i=0)=1-\mu_i+O(\mu_i^2)$ and $\Pr(y_i=1)=\mu_i+O(\mu_i^2)$, we obtain
\[
\begin{aligned}
\mathbb E(\hat x_{\mathrm{Tik,P},i}(y_i)-x^\star_i)^2
&=
x_i^{\star 2}(1-\mu_i+O(\mu_i^2))
+
(h_i-x^\star_i)^{2}(\mu_i+O(\mu_i^2))
+R_i\\
&=
x_i^{\star 2}
\!+\!\mu_i(h_i^2\!-\!2h_i x^\star_i)
+O\!\big(\mu_i^2(h_i^2\!+\!x_i^{\star2})\big)
\!+\!\frac{4\mu_i^2}{(s a_i)^2}
\!+\!x_i^{\star 2}\mu_i^2,
\end{aligned}
\]
where we used $(h_i-x^\star_i)^2\le 2h_i^2+2x_i^{\star 2}$.
In particular,
\[
\begin{aligned}
\mathbb E(\hat x_{\mathrm{Tik,P},i}(y_i)-x^\star_i)^2 =
x_i^{\star 2}
+\mu_i(h_i^2-2h_i x^\star_i)
+O\!\Big(\mu_i^2 h_i^2 + \frac{\mu_i^2}{(s a_i)^2} + x_i^{\star 2}\mu_i^2\Big).
\end{aligned}
\]

\subsubsection{Per-mode MSE ratio}
The unregularized Poisson MLE is $\hat x_{\mathrm{MLE,P},i}=y_i/(s a_i)$, and
\[
\mathbb E(\hat x_{\mathrm{MLE,P},i}-x^\star_i)^2=\frac{x^\star_i}{s a_i}=\frac{\mu_i}{(s a_i)^2}.
\]
Dividing the MSE of the Poisson MAP by $\mu_i/(s a_i)^2$ gives
\[
\begin{aligned}
\frac{\mathbb E(\hat x_{\mathrm{Tik,P},i}(y_i)-x^\star_i)^2}{\mathbb E(\hat x_{\mathrm{MLE,P},i}-x^\star_i)^2}
= 
\frac{(s a_i)^2}{\mu_i}\,x_i^{\star 2}
+
(s a_i)^2 h_i^2
-
2(s a_i)^2 h_i x^\star_i
+
O\!\Big((s a_i)^2\mu_i h_i^2\Big)
+
O(\mu_i)
 +
O\!\Big((s a_i)^2 x_i^{\star 2}\mu_i\Big).
\end{aligned}
\]
Using $\frac{(s a_i)^2}{\mu_i}x_i^{\star 2}=\mu_i$ and $(s a_i)^2 h_i x^\star_i=\mu_i(s a_i h_i)$, this can be rewritten as
\[
\begin{aligned} \frac{\mathbb E(\hat x_{\mathrm{Tik,P},i}(y_i)-x^\star_i)^2}{\mathbb E(\hat x_{\mathrm{MLE,P},i}-x^\star_i)^2}
=
\mu_i
+
(s a_i)^2 h_i^2
-
2\mu_i(s a_i h_i) +
O\!\Big(\mu_i (s a_i)^2 h_i^2\Big)
+
O(\mu_i)
+
O(\mu_i^3).
\end{aligned}
\]
Since $0< s a_i h_i\le 1$ and $0<(s a_i)^2 h_i^2\le 1$ (because $h_i\le 1/(s a_i)$), the last expression simplifies to
\[
\frac{\mathbb E(\hat x_{\mathrm{Tik,P},i}(y_i)-x^\star_i)^2}{\mathbb E(\hat x_{\mathrm{MLE,P},i}-x^\star_i)^2}
=
(s a_i)^2 h_i^2 + O(\mu_i),
\qquad \mu_i\to 0.
\]
This statement alone does not identify the leading order unless one compares the sizes of $(s a_i)^2 h_i^2$ and $\mu_i$.
More precisely, the expansion
\[
\frac{\mathbb E(\hat x_{\mathrm{Tik,P},i}(y_i)-x^\star_i)^2}{\mathbb E(\hat x_{\mathrm{MLE,P},i}-x^\star_i)^2}
=
(s a_i)^2 h_i^2 + O(\mu_i)
\]
implies
\[
\frac{\mathbb E(\hat x_{\mathrm{Tik,P},i}(y_i)-x^\star_i)^2}{\mathbb E(\hat x_{\mathrm{MLE,P},i}-x^\star_i)^2}
\sim (s a_i)^2 h_i^2,
\]
if and only if
\[
\mu_i = o\!\big((s a_i)^2 h_i^2\big),
\qquad \mu_i\to 0.
\]

We can express $(s a_i)^2 h_i^2$ explicitly as
\[
(s a_i)^2 h_i^2
=
\Big(\frac{2s a_i}{s a_i+\sqrt{(s a_i)^2+4\tau}}\Big)^2.
\]
Define
\[
\gamma_i:=\frac{\tau}{(s a_i)^2}.
\]
The quantity $\gamma_i$ can be viewed as an effective regularization level: the larger $\gamma_i$, the stronger the shrinkage induced by the penalty relative to the Poisson sensitivity of the forward map.
Indeed,
\[
(s a_i)^2 h_i^2=\Big(\frac{2}{1+\sqrt{1+4\gamma_i}}\Big)^2,
\]
so that, as $\mu_i=s a_i x^\star_i\to 0$,
\[
\frac{\mathbb E(\hat x_{\mathrm{Tik,P},i}(y_i)-x^\star_i)^2}{\mathbb E(\hat x_{\mathrm{MLE,P},i}-x^\star_i)^2}
=
\Big(\frac{2}{1+\sqrt{1+4\gamma_i}}\Big)^2 + O(\mu_i).
\]
\begin{remark}
There are three regimes:
\begin{itemize}
\item If $\gamma_i\ll 1$ (weak regularization), then
\[
\Big(\frac{2}{1+\sqrt{1+4\gamma_i}}\Big)^2 = 1+O(\gamma_i),
\]
and the regularized estimator behaves essentially like the Poisson MLE.
\item If $\gamma_i\approx 1$ (balanced regime), then
\[
\Big(\frac{2}{1+\sqrt{5}}\Big)^2=\frac{3-\sqrt5}{2}.
\]
\item If $\gamma_i\gg 1$ (strong regularization), then
\[
\Big(\frac{2}{1+\sqrt{1+4\gamma_i}}\Big)^2 \sim \frac{1}{\gamma_i}=\frac{(s a_i)^2}{\tau},
\qquad \gamma_i\to\infty.
\]
In particular, if $s a_i\to 0$ with $\tau$ fixed (e.g.\ when $s\to 0$ or when $a_i\to 0$ at fine scales), then $\gamma_i\to\infty$ and
\[
(s a_i)^2 h_i^2
=
\frac{(s a_i)^2}{\tau}
+
O\!\Big(\frac{(s a_i)^3}{\tau^{3/2}}\Big),
\qquad s a_i\to 0,
\]
so that the per-mode MSE ratio satisfies
\[
\frac{\mathbb E(\hat x_{\mathrm{Tik,P},i}(y_i)-x^\star_i)^2}{\mathbb E(\hat x_{\mathrm{MLE,P},i}-x^\star_i)^2}
=
\frac{(s a_i)^2}{\tau}
+
O\!\Big(\frac{(s a_i)^3}{\tau^{3/2}}\Big)
+
O(\mu_i).
\]
\end{itemize}
Essentially, $\gamma_i\approx 1$ identifies the cutoff scale where regularization starts to dominate.
\end{remark}

\subsubsection{Global MSE ratio}\label{sec:global-mse-tik}

We now discuss what the mode-by-mode analysis implies for the global MSE.
Recall
\[
\begin{aligned} 
\mathbb E\big\|\hat \xb_{\mathrm{Tik,P}}(\yb)-\xb^\star\big\|^2
&=
\sum_{i=1}^d \mathbb E\big(\hat x_{\mathrm{Tik,P},i}(y_i)-x^\star_i\big)^2
+
\sum_{i>d} x_i^{\star 2},\\
\mathbb E\big\|\hat \xb_{\mathrm{MLE,P}}(\yb)-\xb^\star\big\|^2
&=
\sum_{i=1}^d \mathbb E\big(\hat x_{\mathrm{MLE,P},i}(y_i)-x^\star_i\big)^2
+
\sum_{i>d} x_i^{\star2}.
\end{aligned}
\]
Define
\[
V_d := \sum_{i=1}^d \frac{x^\star_i}{sa_i}, \qquad B_d:=\sum_{i=1}^d x_i^{\star 2}, \qquad T_d:= \sum_{i>d} x_i^{\star 2},
\]
so that
\[
\mathbb E \|\hat x_{\mathrm{MLE,P}}(\yb) -x^\star\|^2 = V_d + T_d.
\]
For each $i$, set
\[
h_i:=\hat x_{\mathrm{Tik,P},i}(1)=\frac{2}{s a_i+\sqrt{(s a_i)^2+4\tau}},\]
\[
\gamma_i:=\frac{\tau}{(s a_i)^2},
\]
so that
\[
(s a_i)^2 h_i^2=\Big(\frac{2}{1+\sqrt{1+4\gamma_i}}\Big)^2
=:c_P(\gamma_i)^2.
\]
From the per-mode low-dose expansion,

\[
\frac{\mathbb E(\hat x_{\mathrm{Tik,P},i}(y_i)-x^\star_i)^2}{\mathbb E(\hat x_{\mathrm{MLE,P},i}-x^\star_i)^2}
=
(s a_i)^2 h_i^2 + O(\mu_i)\]
as $\mu_i\to 0$,
and since $\mathbb E(\hat x_{\mathrm{MLE,P},i}-x^\star_i)^2 = x^\star_i/(s a_i)$, this is equivalent to

\[
\mathbb E(\hat x_{\mathrm{Tik,P},i}(y_i)-x^\star_i)^2
=
c_P(\gamma_i)^2\,\frac{x^\star_i}{s a_i}
+
r_i,
\qquad
r_i
=
O\!\Big(\mu_i\,\frac{x^\star_i}{s a_i}\Big),
\]
as $\mu_i\to 0$, where $c_P(\gamma_i)^2\,\frac{x^\star_i}{s a_i}$ should be read as the leading term \emph{relative to} the Poisson MLE scale $x^\star_i/(s a_i)$. Using $\mu_i\,\frac{x^\star_i}{s a_i}=x_i^{\star 2}$, the remainder can be written as
\[
r_i = O(x_i^{\star 2}).
\]
More explicitly, from the MSE expansion
\[
\begin{aligned}
\mathbb E(\hat x_{\mathrm{Tik,P},i}(y_i)-x^\star_i)^2
=
x_i^{\star 2}
+\mu_i\big(h_i^2-2h_i x^\star_i\big) +
O\!\Big(\mu_i^2 h_i^2 + \frac{\mu_i^2}{(s a_i)^2} + x_i^{\star 2}\mu_i^2\Big)
\end{aligned}
\]
as $\mu_i\to 0$, we obtain

\[
\begin{aligned}
\mathbb E(\hat x_{\mathrm{Tik,P},i}(y_i)-x^\star_i)^2
= c(\gamma_i)^2\,\frac{x^\star_i}{s a_i}
+
\bigl(1-2\,s a_i h_i\bigr)x_i^{\star 2}
+
O(x_i^{\star 2})
+
O\bigl((s a_i)^2 x_i^{\star 4}\bigr).
\end{aligned} 
\]
When passing to the global MSE, we sum these remainders over $i\le d$; under the uniform low-dose condition
$\max_{i\le d(s)}\mu_i\to 0$, the constants implicit in the $O(\cdot)$ terms may be chosen uniformly over $i\le d(s)$, so that the summed remainders are
$O(B_d)$ and $O\!\big(\sum_{i=1}^d (s a_i)^2 x_i^{\star 4}\big)$.

Summing over $i\le d$ then yields the global expansion
\[
\begin{aligned}
\mathbb E\big\|\hat \xb_{\mathrm{Tik,P}}(\yb)-\xb^\star\big\|^2
= &
\sum_{i=1}^d c_P(\gamma_i)^2\,\frac{x^\star_i}{s a_i}
+
\sum_{i=1}^d \bigl(1-2\,s a_i h_i\bigr)x_i^{\star 2}   +
T_d
+
O(B_d)
+
O\!\Big(\sum_{i=1}^d (s a_i)^2 x_i^{\star 4}\Big).
\end{aligned}
\]

Dividing by the MSE relative to Poisson MLE gives
\[
\begin{aligned}
\frac{\mathbb E\|\hat \xb_{\mathrm{Tik,P}}(\yb)-\xb^\star\|^2}{\mathbb E\|\hat \xb_{\mathrm{MLE,P}}(\yb)-\xb^\star\|^2}
=  \frac{\sum_{i=1}^d \left( c_P(\gamma_i)^2\,\frac{x^\star_i}{s a_i}
+
\bigl(1-2\,s a_i h_i\bigr)x_i^{\star 2}\right)+ T_d
+
O(B_d)
+
O\!\Big(\sum_{i=1}^d (s a_i)^2 x_i^{\star 4}\Big)
}{V_d+T_d}.
\end{aligned}
\]

In particular, assume that $d=d(s)$ is chosen so that 
\[
\frac{B_{d(s)}}{V_{d(s)}+T_{d(s)}}\to 0,
\qquad
\frac{T_{d(s)}}{V_{d(s)}+T_{d(s)}}\to 0, \qquad \frac{\sum_{i=1}^{d(s)} (s a_i)^2 x_i^{\star 4}}{V_{d(s)}+T_{d(s)}}\to 0
\]
as $s\to 0$, and (uniform per-mode low-dose regime)
\[
\max_{1\le i\le d(s)} \mu_i
\to 0.
\]
Under these assumptions, the error terms in the numerator are negligible compared to $V_{d(s)}+T_{d(s)}$, and therefore the asymptotic behavior of the global ratio is driven by the leading term
\[
\frac{\sum_{i=1}^{d(s)} c_P(\gamma_i)^2\,\frac{x^\star_i}{s a_i}}{V_{d(s)}+T_{d(s)}}.
\]
Since $T_{d(s)}/(V_{d(s)}+T_{d(s)})\to 0$, it suffices to compare to $V_{d(s)}$.
Define the $V_d$-weighted average
\[
\bar c_{P,d}^{\,2}
:=
\frac{1}{V_d}\sum_{i=1}^d c_P(\gamma_i)^2\,\frac{x^\star_i}{s a_i}.
\]
Then the global ratio admits the expression
\[
\frac{\mathbb E\|\hat \xb_{\mathrm{Tik,P}}(\yb)-\xb^\star\|^2}{\mathbb E\|\hat \xb_{\mathrm{MLE,P}}(\yb)-\xb^\star\|^2}
=
\bar c_{P,d(s)}^{\,2}
+
o(1),
\qquad s\to 0.
\]

\subsection{Proof of Proposition~\ref{prop:per-mode-mse-ogmle}}
\subsubsection{Per-mode homoscedastic Gaussian MAP expression}
As we did for the proof of Proposition \ref{prop:per-mode-mse-pmle}, we begin by writing down the bias-variance decomposition over modes:
\[
\mathbb E\big\| \hat{\xb}_{{\mathrm{Tik,G}}}(\yb)-\xb^\star \big\|^2
=
\sum_{i=1}^d \mathbb E \big(\hat x_{\mathrm{Tik,G},i}(y_i)-x^\star_i\big)^2
\;+\;
\sum_{i>d}x_i^{\star 2}.
\]
Set
\[
h_i := \hat x_{\mathrm{Tik,G},i}(1)=\frac{s a_i}{(s a_i)^2+\tau},
\]
so that 
\[
\hat x_{\mathrm{Tik,G},i}(y_i)=h_i y_i.
\]
Then
\[
\hat x_{\mathrm{Tik,G},i}(y_i)-x^\star_i
=
\bigl(s a_i h_i-1\bigr)x^\star_i + h_i (y_i-\mu_i),
\]
and hence 
\[
\mathbb E \big(\hat x_{\mathrm{Tik,G},i}(y_i)-x^\star_i\big)^2
=
\bigl(1-s a_i h_i\bigr)^2 x_i^{\star 2} + h_i^2\,\mu_i.
\]
Using $x^\star_i=\mu_i/(s a_i)$, we obtain
\[
\mathbb E \big(\hat x_{\mathrm{Tik,G},i}(y_i)-x^\star_i\big)^2
=
h_i^2\,\mu_i
+
\bigl(1-s a_i h_i\bigr)^2\frac{\mu_i^2}{(s a_i)^2}.
\]

\subsubsection{Per-mode MSE ratio}
The unregularized Poisson MLE is $\hat x_{\mathrm{MLE,P},i}=y_i/(s a_i)$, and
\[
\mathbb E(\hat x_{\mathrm{MLE,P},i}-x^\star_i)^2=\frac{x^\star_i}{s a_i}=\frac{\mu_i}{(s a_i)^2}.
\]
Dividing by $\mu_i/(s a_i)^2$ gives
\[
\frac{\mathbb E(\hat x_{\mathrm{Tik,G},i}(y_i)-x^\star_i)^2}{\mathbb E(\hat x_{\mathrm{MLE,P},i}-x^\star_i)^2}
=
(s a_i)^2 h_i^2
+
\bigl(1-s a_i h_i\bigr)^2\mu_i.
\]
We can express $(s a_i)^2 h_i^2$ explicitly as
\[
(s a_i)^2 h_i^2
=
\Big(\frac{(s a_i)^2}{(s a_i)^2+\tau}\Big)^2.
\]
Define
\[
\gamma_i:=\frac{\tau}{(s a_i)^2}.
\]
We have,
\[
(s a_i)^2 h_i^2=\Big(\frac{1}{1+\gamma_i}\Big)^2,
\]
and
\[
1-s a_i h_i=\frac{\tau}{(s a_i)^2+\tau}=\frac{\gamma_i}{1+\gamma_i},
\]
so that
\[
\frac{\mathbb E(\hat x_{\mathrm{Tik,G},i}(y_i)-x^\star_i)^2}{\mathbb E(\hat x_{\mathrm{MLE,P},i}-x^\star_i)^2}
=
\Big(\frac{1}{1+\gamma_i}\Big)^2
+
\Big(\frac{\gamma_i}{1+\gamma_i}\Big)^2\mu_i.
\]
\begin{remark}
As in Proposition \ref{prop:per-mode-mse-pmle}, there are three regimes:
\begin{itemize}
\item If $\gamma_i\ll 1$ (weak regularization), then
\[
\Big(\frac{1}{1+\gamma_i}\Big)^2 = 1+O(\gamma_i),
\]
and the regularized estimator behaves essentially like the Poisson MLE in the low-dose regime.
\item If $\gamma_i\approx 1$ (balanced regime), then
\[
\Big(\frac{1}{1+1}\Big)^2=\frac14,
\]
and
\[
\frac{\mathbb E(\hat x_{\mathrm{Tik,G},i}(y_i)-x^\star_i)^2}{\mathbb E(\hat x_{\mathrm{MLE,P},i}-x^\star_i)^2}
=
\frac14+\frac14\,\mu_i.
\]
\item If $\gamma_i\gg 1$ (strong regularization), then
\[
\Big(\frac{1}{1+\gamma_i}\Big)^2 \sim \frac{1}{\gamma_i^2}=\frac{(s a_i)^4}{\tau^2},
\qquad \gamma_i\to\infty.
\]
In particular, if $s a_i\to 0$ with $\tau$ fixed (e.g.\ when $s\to 0$ or when $a_i\to 0$ at fine scales), then $\gamma_i\to\infty$ and
\[
(s a_i)^2 h_i^2
=
\frac{(s a_i)^4}{\tau^2}
+
O\!\Big(\frac{(s a_i)^6}{\tau^{3}}\Big),
\qquad s a_i\to 0,
\]
so that the per-mode MSE ratio satisfies
\[
\begin{aligned}
\frac{\mathbb E(\hat x_{\mathrm{Tik,G},i}(y_i)-x^\star_i)^2}{\mathbb E(\hat x_{\mathrm{MLE,P},i}-x^\star_i)^2}
=
\frac{(s a_i)^4}{\tau^2}
+
\mu_i
+
O\!\Big(\frac{(s a_i)^6}{\tau^{3}}\Big)
+
O\!\Big(\mu_i\,\frac{(s a_i)^2}{\tau}\Big),
\end{aligned} 
\]
where the leading term is $\frac{(s a_i)^4}{\tau^2}$ if and only if $\mu_i=o\!\big((s a_i)^4/\tau^2\big)$.
\end{itemize}
Essentially, $\gamma_i\approx 1$ identifies the cutoff scale where regularization starts to dominate.
\end{remark}

\subsubsection{Global MSE ratio}\label{subsubsec:global_mse}

We now discuss what the mode-by-mode analysis implies for the global MSE. As in the proof of Proposition \ref{prop:per-mode-mse-pmle}, 
define
\[
V_d := \sum_{i=1}^d \frac{x^\star_i}{s a_i}, \qquad B_d:=\sum_{i=1}^d x_i^{\star 2}, \qquad T_d:= \sum_{i>d} x_i^{\star 2},
\]
so that
\[
\mathbb E \|\hat \xb_{\mathrm{MLE,P}}(\yb) -\xb^\star\|^2 = V_d + T_d.
\]
For each $i$, set
\[
h_i:=\frac{s a_i}{(s a_i)^2+\tau},
\qquad
\gamma_i:=\frac{\tau}{(s a_i)^2},
\]
so that
\[
(s a_i)^2 h_i^2=\Big(\frac{1}{1+\gamma_i}\Big)^2
=:c_G(\gamma_i)^2.
\]
From the per-mode identity,
\[
\frac{\mathbb E(\hat x_{\mathrm{Tik,G},i}(y_i)-x^\star_i)^2}{\mathbb E(\hat x_{\mathrm{MLE,P},i}-x^\star_i)^2}
=
(s a_i)^2 h_i^2
+
\bigl(1-s a_i h_i\bigr)^2\mu_i,
\]
and since $\mathbb E(\hat x_{\mathrm{MLE,P},i}-x^\star_i)^2 = x^\star_i/(s a_i)$, this is equivalent to
\[
\mathbb E(\hat x_{\mathrm{Tik,G},i}(y_i)-x^\star_i)^2
=
c_G(\gamma_i)^2\,\frac{x^\star_i}{s a_i}
+
\bigl(1-s a_i h_i\bigr)^2 x_i^{\star 2}.
\]
Using $1-s a_i h_i=\gamma_i/(1+\gamma_i)$, we obtain
\[
\mathbb E(\hat x_{\mathrm{Tik,G},i}(y_i)-x^\star_i)^2
=
c_G(\gamma_i)^2\,\frac{x^\star_i}{s a_i}
+
\Big(\frac{\gamma_i}{1+\gamma_i}\Big)^2 x_i^{\star 2}.
\]

Summing over $i\le d$ yields the global identity
\[
\mathbb E\big\|\hat \xb_{\mathrm{Tik,G}}(\yb)-\xb^\star\big\|^2
=
\sum_{i=1}^d c_G(\gamma_i)^2\,\frac{x^\star_i}{s a_i}
+
\sum_{i=1}^d \Big(\frac{\gamma_i}{1+\gamma_i}\Big)^2 x_i^{\star 2}
+
T_d.
\]

Dividing by the MSE relative to Poisson MLE gives
\[
\frac{\mathbb E\|\hat \xb_{\mathrm{Tik,G}}(\yb)-\xb^\star\|^2}{\mathbb E\|\hat \xb_{\mathrm{MLE,P}}(\yb)-\xb^\star\|^2}
=
\frac{
\sum_{i=1}^d c_G(\gamma_i)^2\,\frac{x^\star_i}{s a_i}
+
\sum_{i=1}^d \Big(\frac{\gamma_i}{1+\gamma_i}\Big)^2 x_i^{\star 2}
+
T_d
}{V_d+T_d}.
\]

In particular, assume that $d=d(s)$ is chosen so that 
\[
\frac{B_{d(s)}}{V_{d(s)}+T_{d(s)}}\to 0,
\qquad
\frac{T_{d(s)}}{V_{d(s)}+T_{d(s)}}\to 0,
\qquad s\to 0.
\]
Under these assumptions, the second term in the numerator is negligible compared to $V_{d(s)}+T_{d(s)}$, and therefore the asymptotic behavior of the global ratio is driven by the leading term
\[
\frac{\sum_{i=1}^{d(s)} c_G(\gamma_i)^2\,\frac{x^\star_i}{s a_i}}{V_{d(s)}+T_{d(s)}}.
\]
Since $T_{d(s)}/(V_{d(s)}+T_{d(s)})\to 0$, it suffices to compare to $V_{d(s)}$.
Define the $V_d$-weighted average
\[
\bar c_{G,d}^{\,2}
:=
\frac{1}{V_d}\sum_{i=1}^d c_G(\gamma_i)^2\,\frac{x^\star_i}{s a_i}.
\]
Then the global ratio becomes
\[
\frac{\mathbb E\|\hat \xb_{\mathrm{Tik,G}} (\yb)-\xb^\star\|^2}{\mathbb E\|\hat \xb_{\mathrm{MLE,P}} (\yb)-\xb^\star\|^2}
=
\bar c_{G,d(s)}^{\,2}
+
o(1),
\qquad s\to 0.
\]

\subsection{Heteroscedastic Gaussian Surrogate}\label{sec:HG}
Following the discussion at the beginning of Section~\ref{sec:OG}, we consider the heteroscedastic Gaussian approximation $y_j \approx \mathcal{N}(s(\Am \xb^\star)_j, s(\Am \xb^\star)_j)$. To ensure that the resulting objective is well-defined (and the minimization problem well-posed), we introduce a stabilizing floor $\epsilon>0$ and define the corresponding negative log-likelihood as 
\begin{equation} \label{eq:HG_short}
\begin{aligned}
\mathcal L_{\mathrm{HG}}^{(\epsilon)}(\xb;\yb) :=
\sum_{j=1}^d\left(
\frac12\log\!\big(s (\Am \xb)_j+\epsilon\big)
+
\frac{(\yb_j-s (\Am \xb)_j)^2}{2\big(s (\Am \xb)_j+\epsilon\big)}
\right).
\end{aligned}
\end{equation}
The heteroscedastic Gaussian (HG) MLE is  
\begin{equation}\label{eq:HG-MLE}
    \hat \xb_{\mathrm{HG}} \in \arg\min_{\xb\in \mathcal X_+}\mathcal L_{\mathrm{HG}}^{(\epsilon)}(\xb;\yb).
\end{equation}
In the diagonal model \eqref{eq:diag_obs}, the HG MLE becomes
\begin{equation}\label{eq:HG_closed_short}
\hat x_{\text{HG},j}(y_j)
=
\max\left\{
\frac{1}{s a_j}\left(\frac{-1+\sqrt{1+4(y_j+\epsilon)^2}}{2}-\epsilon\right),
\;0
\right\},
\end{equation}
The next proposition characterizes the low-dose MSE of the HG estimator and compares it to that of the unregularized Poisson MLE. 
\begin{proposition}\label{prop:per-mode-mse-hgmle}
Let $\epsilon>0$ and define
\begin{equation}\label{eq:ceps_short}
c(\epsilon):=
\left(\frac{-1+\sqrt{1+4(1+\epsilon)^2}}{2}-\epsilon\right).
\end{equation}
Then, for all $1\leq j\leq d$ such that $\mu_j\to 0$, we have 
\begin{equation}\label{eq:HG_ratio_short}
\frac{\EE(\hat x_{\text{HG},j}-x_j^\star)^2}{\EE(\hat x_{\mathrm{MLE,P},j}-x_j^\star)^2}
=
c(\epsilon)^2 + O(\mu_j),
\qquad
c(\epsilon)^2\in\Big(\tfrac14,\tfrac{3-\sqrt5}{2}\Big).
\end{equation}
\end{proposition}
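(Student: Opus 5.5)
The plan is to reuse, almost verbatim, the mode-splitting argument from the proof of Proposition~\ref{prop:per-mode-mse-pmle}. The only new ingredients are the values of the closed form \eqref{eq:HG_closed_short} at $y_j=0$ and $y_j=1$, a domination bound for $y_j\ge 2$, and an elementary study of $c(\epsilon)$.

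\textbf{Step 1: values of the estimator at $y_j=0$ and $y_j=1$.}

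For $y_j=0$, the inner expression is $\frac{1}{sa_j}\big(\frac{-1+\sqrt{1+4\epsilon^2}}{2}-\epsilon\big)$. Since $\sqrt{1+4\epsilon^2}<1+2\epsilon$, this is negative, so the $\max$ returns $\hat x_{\mathrm{HG},j}(0)=0$.

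For $y_j=1$, we get $\hat x_{\mathrm{HG},j}(1)=c(\epsilon)/(sa_j)=:h_j$, provided $c(\epsilon)>0$. This positivity follows from $1+4(1+\epsilon)^2>(1+2\epsilon)^2$, which reduces to $5+8\epsilon>1+4\epsilon$. Hence $sa_jh_j=c(\epsilon)$.

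For general $k$, write $z=k+\epsilon$. Using $\sqrt{1+4z^2}\le 1+2z$ gives
\[
0\le \hat x_{\mathrm{HG},j}(k)\le \frac{k}{sa_j}.
\]
This is the same domination used for Poisson MAP.

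\textbf{Step 2: expansion of the MSE.}

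Split $\mathbb E(\hat x_{\mathrm{HG},j}-x_j^\star)^2$ into the events $k=0$, $k=1$ and $k\ge 2$:
\[
x_j^{\star2}\Pr(y_j=0)+(h_j-x_j^\star)^2\Pr(y_j=1)+R_j .
\]
Because of the domination in Step 1, the bound on $R_j$ from the Poisson MAP proof carries over unchanged:
\[
R_j\le \frac{4\mu_j^2}{(sa_j)^2}+x_j^{\star2}\mu_j^2 .
\]
It uses $k^2\le 2k(k-1)$ and $\mathbb E[y_j(y_j-1)]=\mu_j^2$.

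Next, insert $\Pr(y_j=0)=1-\mu_j+O(\mu_j^2)$ and $\Pr(y_j=1)=\mu_j+O(\mu_j^2)$. Then divide by the Poisson MLE error $\mu_j/(sa_j)^2$, and use $x_j^\star=\mu_j/(sa_j)$ and $0<sa_jh_j=c(\epsilon)<1$. Every term except $(sa_j)^2h_j^2$ becomes $O(\mu_j)$. This gives
\[
\text{ratio}=c(\epsilon)^2+O(\mu_j).
\]

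\textbf{Step 3: the range of $c(\epsilon)^2$.}

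This is the one genuinely new piece. First compute the derivative:
\[
c'(\epsilon)=\frac{2(1+\epsilon)}{\sqrt{1+4(1+\epsilon)^2}}-1<0 .
\]
So $c$ is strictly decreasing on $[0,\infty)$.

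The two endpoints are as follows.
\begin{itemize}
\item At $\epsilon=0$: $c(0)=\frac{\sqrt5-1}{2}$, so $c(0)^2=\frac{3-\sqrt5}{2}$.
\item As $\epsilon\to\infty$: the expansion $\sqrt{1+4(1+\epsilon)^2}=2(1+\epsilon)+O(1/\epsilon)$ gives $c(\epsilon)\to\tfrac12$.
\end{itemize}
Together with $c>0$, strict monotonicity gives $c(\epsilon)^2\in(\tfrac14,\tfrac{3-\sqrt5}{2})$ for every $\epsilon>0$. These two endpoints are exactly the Poisson MAP constant at $\gamma_j=1$ and the homoscedastic constant at $\gamma_j=1$.

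I expect the main care to be needed in Step 1. One must check that the $\max\{\cdot,0\}$ truncation is active at $y_j=0$, because otherwise a nonzero bias term of order $1/(sa_j)$ would survive and destroy the expansion. One must also check that the estimator is dominated by $y_j/(sa_j)$, so that the $k\ge2$ tail can be controlled exactly as before. The remaining steps are routine.
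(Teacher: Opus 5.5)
Your proposal is correct and follows essentially the same route as the paper's proof: the split into $k=0$, $k=1$, $k\ge 2$, the domination $\hat x_{\mathrm{HG},j}(k)\le k/(sa_j)$ with the $k^2\le 2k(k-1)$ tail bound, and the division by $\mu_j/(sa_j)^2$. Two of your steps tighten points the paper leaves implicit: the explicit check that the truncation gives $\hat x_{\mathrm{HG},j}(0)=0$, and the derivative argument for strict monotonicity of $c(\epsilon)$ (the paper only gives the upper bound $c(\epsilon)\le \tfrac12+\tfrac{1}{8(1+\epsilon)}$ and asserts the endpoint limits).
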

\begin{proof}
By an abuse of notation, we write $\hat{\xb}_{\text{HG}}(\yb)$ for the embedding
\[
\bigl(\hat \xb_{\text{HG}}(\yb)\bigr)_i :=
\begin{cases}
\hat x_{\text{HG},i}(y_i), & i\le d,\\[2pt]
0, & i>d.
\end{cases}
\]
We have the standard bias-variance decomposition over modes:
\[
\begin{aligned}
\mathbb E\big\| \hat{\xb}_{\text{HG}}(\yb)-\xb^\star \big\|^2
&=
\sum_{i=1}^d \mathbb E \big(\hat x_{\text{HG},i}(y_i)-x^\star_i\big)^2
\;+\;
\sum_{i>d}x_i^{\star 2}.
\end{aligned}
\]
For each $i$,
\[
\begin{aligned}
\mathbb E \big(\hat x_{\text{HG},i}(y_i)-x^\star_i\big)^2
&=
\sum_{k=0}^\infty \big( \hat{x}_{\text{HG},i}(k)-x^\star_i\big)^2\Pr(y_i=k),\\
\Pr(y_i=k)
&=e^{-\mu_i}\frac{\mu_i^k}{k!},
\qquad
\mu_i:=s a_i x^\star_i.
\end{aligned}
\]

Assume $\mu_i\to 0$. Then
\[
\begin{aligned}
\Pr(y_i=0)
=e^{-\mu_i}=1-\mu_i+O(\mu_i^2),\qquad 
\Pr(y_i=1) =\mu_i e^{-\mu_i}=\mu_i+O(\mu_i^2),
\end{aligned}
\]
and hence $\Pr(y_i\ge 2)=O(\mu_i^2)$.
Set
\[
\begin{aligned}
g_i := \hat x_{\text{HG},i}(1)
&=
\frac{1}{s a_i}\left(
\frac{-1+\sqrt{1+4(1+\epsilon)^2}}{2}-\epsilon
\right).
\end{aligned}
\]
Splitting the expectation into $k=0$, $k=1$ and $k\ge 2$ gives
\[
\begin{aligned}
\mathbb E(\hat x_{\text{HG},i}(y_i)-x^\star_i)^2 =
x_i^{\star 2}\,\Pr(y_i=0)
+
(g_i-x^\star_i)^2\,\Pr(y_i=1)
+
\widetilde R_i,
\end{aligned}
\]
where
\[
\widetilde R_i
:=
\sum_{k\ge 2} \big(\hat x_{\text{HG},i}(k)-x^\star_i\big)^2\Pr(y_i=k).
\]

We bound $\widetilde R_i$ explicitly. For any $k\ge 0$, using \[ 
\sqrt{1+4(k+\epsilon)^2}\le 1+2(k+\epsilon)
\]
we obtain
\[
\frac{-1+\sqrt{1+4(k+\epsilon)^2}}{2}-\epsilon \le k,
\]
and therefore $\hat x_{\text{HG},i}(k)\le k/(s a_i)$.
It follows that
\[
(\hat x_{\text{HG},i}(k)-x^\star_i)^2
\le
2\Big(\frac{k}{s a_i}\Big)^2 + 2x_i^{\star 2}.
\]
Therefore,
\[
\begin{aligned}
\widetilde R_i
&\le
\frac{2}{(s a_i)^2}\,
\mathbb E\!\big[y_i^2\mathbf 1_{\{y_i\ge 2\}}\big]
\;+\;
2x_i^{\star 2}\,\Pr(y_i\ge 2).
\end{aligned}
\]
For integers $k\ge 2$ one has $k^2 \le 2k(k-1)$, hence
\[
y_i^2\mathbf 1_{\{y_i\ge 2\}} \le 2y_i(y_i-1).
\]
Taking expectations and using $\mathbb E[y_i(y_i-1)]=\mu_i^2$ yields
\[
\mathbb E\!\big[y_i^2\mathbf 1_{\{y_i\ge 2\}}\big]\le 2\mu_i^2.
\]
Moreover, since $y_i(y_i-1)\ge 2\,\mathbf 1_{\{y_i\ge 2\}}$,
\[
\Pr(y_i\ge 2)\le \frac{\mathbb E[y_i(y_i-1)]}{2}=\frac{\mu_i^2}{2}.
\]
Consequently,
\[
\widetilde R_i \le \frac{4\mu_i^2}{(s a_i)^2} + x_i^{\star 2}\mu_i^2.
\]

Using $\Pr(y_i=0)=1-\mu_i+O(\mu_i^2)$ and $\Pr(y_i=1)=\mu_i+O(\mu_i^2)$, we obtain
\[
\begin{aligned}
\mathbb E(\hat x_{\text{HG},i}(y_i)-x^\star_i)^2
&=
x_i^{\star 2}(1-\mu_i+O(\mu_i^2))
+
(g_i-x^\star_i)^2(\mu_i+O(\mu_i^2))
+\widetilde R_i\\
&=
x_i^{\star 2}
+\mu_i(g_i^2-2g_i x^\star_i)
+O\!\big(\mu_i^2(g_i^2+x_i^{\star 2})\big)
+\frac{4\mu_i^2}{(s a_i)^2}
+x_i^{\star 2}\mu_i^2,
\end{aligned}
\]
where we used $(g_i-x^\star_i)^2\le 2g_i^2+2x_i^{\star 2}$.
In particular,
\[
\begin{aligned}
\mathbb E(\hat x_{\text{HG},i}(y_i)-x^\star_i)^2
=
x_i^{\star 2}
+\mu_i(g_i^2-2g_i x^\star_i)
+O\!\Big(
\mu_i^2 g_i^2
+
\frac{\mu_i^2}{(s a_i)^2}
+
x_i^{\star 2}\mu_i^2
\Big).
\end{aligned}
\]
The unregularized Poisson MLE is $\hat x_{\mathrm{MLE,P},i}=y_i/(s a_i)$, and
\[
\mathbb E(\hat x_{\mathrm{MLE,P},i}-x^\star_i)^2=\frac{x^\star_i}{s a_i}=\frac{\mu_i}{(s a_i)^2}.
\]
Dividing by $\mu_i/(s a_i)^2$ gives
\[
\begin{aligned}
\frac{\mathbb E(\hat x_{\text{HG},i}(y_i)-x^\star_i)^2}{\mathbb E(\hat x_{\mathrm{MLE,P},i}-x^\star_i)^2}
=
\frac{(s a_i)^2}{\mu_i}\,x_i^{\star 2}
+
(s a_i)^2 g_i^2
-
2(s a_i)^2 g_i x^\star_i +
O\!\Big((s a_i)^2\mu_i g_i^2\Big)
+
O(\mu_i)
+
O\!\Big((s a_i)^2 x_i^{\star 2}\mu_i\Big).
\end{aligned}
\]
Using $\frac{(s a_i)^2}{\mu_i}x_i^{\star 2}=\mu_i$ and $(s a_i)^2 g_i x^\star_i=\mu_i(s a_i g_i)$, this can be rewritten as
\[
\begin{aligned}
\frac{\mathbb E(\hat x_{\text{HG},i}(y_i)-x^\star_i)^2}{\mathbb E(\hat x_{\mathrm{MLE,P},i}-x^\star_i)^2}
=
\mu_i
+
(s a_i)^2 g_i^2
-
2\mu_i(s a_i g_i) +
O\!\Big(\mu_i (s a_i)^2 g_i^2\Big)
+
O(\mu_i)
+
O(\mu_i^3).
\end{aligned}
\]
Since $0\le s a_i g_i\le 1$ and $0\le (s a_i)^2 g_i^2\le 1$ (because $g_i\le 1/(s a_i)$), the last expansion simplifies to
\[
\begin{aligned}
\frac{\mathbb E(\hat x_{\text{HG},i}(y_i)-x^\star_i)^2}{\mathbb E(\hat x_{\mathrm{MLE,P},i}-x^\star_i)^2}
&=
(s a_i)^2 g_i^2 + O(\mu_i),
\qquad \mu_i\to 0.
\end{aligned}
\]
As we discussed in the proof of Proposition \ref{prop:per-mode-mse-pmle}, this statement alone does not identify the leading order unless one compares the sizes of $(s a_i)^2 g_i^2$ and $\mu_i$.
More precisely, the expansion
\[
\frac{\mathbb E(\hat x_{\text{HG},i}(y_i)-x^\star_i)^2}{\mathbb E(\hat x_{\mathrm{MLE,P},i}-x^\star_i)^2}
=
(s a_i)^2 g_i^2 + O(\mu_i)
\]
implies
\[
\frac{\mathbb E(\hat x_{\text{HG},i}(y_i)-x^\star_i)^2}{\mathbb E(\hat x_{\mathrm{MLE,P},i}-x^\star_i)^2}
\sim (s a_i)^2 g_i^2
\]
if and only if
\[ 
\mu_i = o\!\big((s a_i)^2 g_i^2\big), \qquad \mu_i \to 0.
\]

We can express $(s a_i)^2 g_i^2$ explicitly. Define
\[
\begin{aligned}
c(\epsilon)
:=
(s a_i)g_i
&=
\left(
\frac{-1+\sqrt{1+4(1+\epsilon)^2}}{2}-\epsilon
\right),
\end{aligned}
\]
which depends only on $\epsilon>0$; crucially, it does \emph{not} depend on $i$.
Then
\[
(s a_i)^2 g_i^2=c(\epsilon)^2,
\]
and therefore
\[
\frac{\mathbb E(\hat x_{\text{HG},i}(y_i)-x^\star_i)^2}{\mathbb E(\hat x_{\mathrm{MLE,P},i}-x^\star_i)^2}
=
c(\epsilon)^2 + O(\mu_i),
\qquad \mu_i\to 0.
\]

Note that the constant $c(\epsilon)$ satisfies 
\[ 
\frac12 < c(\epsilon) < \frac{\sqrt5-1}{2} \quad \text{for all }\epsilon>0.
\]
Indeed, $\sqrt{1+4(1+\epsilon)^2} \le 1+2(1+\epsilon)$ implies $c(\epsilon)\le 1$, and the sharper inequality
\[
\sqrt{1+4(1+\epsilon)^2}
\le
2(1+\epsilon)+\frac{1}{4(1+\epsilon)}
\]
yields $c(\epsilon)\le \frac12+\frac{1}{8(1+\epsilon)}$, hence $c(\epsilon)\downarrow \frac12$ as $\epsilon\to\infty$.
Moreover, $c(\epsilon)\uparrow \frac{\sqrt5-1}{2}$ as $\epsilon\downarrow 0$.

In particular,
\[
\begin{aligned}
c(\epsilon)
&=
\frac12+\frac{1}{8(1+\epsilon)}
+O\!\Big(\frac{1}{(1+\epsilon)^3}\Big),
\qquad \epsilon\to\infty,
\end{aligned}
\]
so that
\[
\begin{aligned}
c(\epsilon)^2
&=
\frac14+\frac{1}{8(1+\epsilon)}
+O\!\Big(\frac{1}{(1+\epsilon)^2}\Big),
\qquad \epsilon\to\infty.
\end{aligned}
\]
Thus, for fixed $\epsilon>0$, 
\[
\frac{\mathbb E(\hat x_{\text{HG},i}(y_i)-x^\star_i)^2}{\mathbb E(\hat x_{\mathrm{MLE,P},i}-x^\star_i)^2}
\in\left(\frac{1}{4},\frac{3-\sqrt5}{2}\right)
\]
up to the $O(\mu_i)$ error.
\end{proof}

\subsubsection{Global MSE ratio}
We now discuss what Proposition~\ref{prop:per-mode-mse-hgmle} implies for the global MSE.
As in the previous proofs, we use the decomposition
\[
\mathbb E \|\hat \xb_{\mathrm{MLE,P}}(\yb) -\xb^\star\|^2 = V_d + T_d,
\]
where 
\[
\begin{aligned}
V_d &:= \sum_{i=1}^d \frac{x^\star_i}{sa_i}, \qquad
B_d:=\sum_{i=1}^d x_i^{\star 2}, \qquad
T_d:= \sum_{i>d} x_i^{\star 2}.
\end{aligned}
\]
For each $i$, we have
\[
\hat x_{G,i}(1)=\frac{c(\epsilon)}{s a_i},
\]
hence $(s a_i)^2\,\hat x_{\text{HG},i}(1)^2 = c(\epsilon)^2$. 
Let $\mu_i \to 0$; from the mode-by-mode low-dose expansion,
\[
\frac{\mathbb E(\hat x_{\text{HG},i}(y_i)-x^\star_i)^2}{\mathbb E(\hat x_{\mathrm{MLE,P},i}-x^\star_i)^2}
=
c(\epsilon)^2 + O(\mu_i),
\]
and since $\mathbb E(\hat x_{\mathrm{MLE,P},i}-x^\star_i)^2 = x^\star_i/(s a_i)$, this is equivalent to
\[
\begin{aligned}
\mathbb E(\hat x_{\text{HG},i}(y_i)-x^\star_i)^2
&=
c(\epsilon)^2\frac{x^\star_i}{s a_i}
+
r_i,\\
r_i
&= O\!\Big(\mu_i\,\frac{x^\star_i}{s a_i}\Big).
\end{aligned}
\]
Using $\mu_i\,\frac{x^\star_i}{s a_i}=x_i^{\star 2}$, the remainder can be written as
\[
r_i = O(x_i^{\star 2}).
\]
More explicitly, from the MSE  expansion
\[
\begin{aligned}
\mathbb E(\hat x_{\text{HG},i}(y_i)-x^\star_i)^2
=
x_i^{\star 2}
+\mu_i\big(g_i^2-2g_i x^\star_i\big) +
O\!\Big(
\mu_i^2 g_i^2
+ \frac{\mu_i^2}{(s a_i)^2}
+ x_i^{\star 2}\mu_i^2
\Big),
\end{aligned}
\]
with $g_i=\hat x_{\text{HG},i}(1)=c(\epsilon)/(s a_i)$, we obtain
\[
\begin{aligned}
\mathbb E(\hat x_{\text{HG},i}(y_i)-x^\star_i)^2 =
c(\epsilon)^2\frac{x^\star_i}{s a_i}
+
\bigl(1-2c(\epsilon)\bigr)x_i^{\star 2}
+
O(x_i^{\star 2})
+
O\bigl((s a_i)^2 x_i^{\star 4}\bigr), 
\end{aligned}
\]
as $\mu_i\to 0$. When passing to the global MSE, we sum these remainders over $i\le d$; under the uniform low-dose condition
$\max_{i\le d(s)}\mu_i\to 0$, the constants implicit in the $O(\cdot)$ terms may be chosen uniformly over $i\le d(s)$, so that the summed remainders are
$O(B_d)$ and $O\!\big(\sum_{i=1}^d (s a_i)^2 x_i^{\star 4}\big)$.

Summing over $i\le d$ yields the global expansion
\[
\begin{aligned}
\mathbb E\big\|\hat \xb_\text{HG}(\yb)-\xb^\star\big\|^2
=
c(\epsilon)^2 V_d
+
\bigl(1-2c(\epsilon)\bigr) B_d
+
T_d +
O(B_d)
+
O\!\Big(\sum_{i=1}^d (s a_i)^2 x_i^{\star 4}\Big).
\end{aligned}
\]

Dividing by the MSE relative to Poisson MLE gives
\[
\begin{aligned}
\frac{\mathbb E\|\hat \xb_\text{HG}(\yb)-\xb^\star\|^2}
     {\mathbb E\|\hat \xb_{\mathrm{MLE,P}}(\yb)-\xb^\star\|^2}
=
\frac{
c(\epsilon)^2 V_d
+\bigl(1-2c(\epsilon)\bigr) B_d
+T_d
+
O(B_d)
+ O\!\big(\sum_{i=1}^d (s a_i)^2 x_i^{\star4}\big)
}{V_d+T_d}.
\end{aligned}
\]

In particular, if $d=d(s)$ is chosen so that 
\[
\begin{aligned}
\frac{B_{d(s)}}{V_{d(s)}}\to 0,
\qquad
\frac{T_{d(s)}}{V_{d(s)}}\to 0,\qquad \frac{\sum_{i=1}^{d(s)} (s a_i)^2 x_i^{\star4}}{V_{d(s)}}\to 0,
\qquad s\to 0,
\end{aligned}
\]
and (uniform per-mode low-dose regime)
\[
\begin{aligned}
\max_{1\le i\le d(s)} \mu_i
&=
\max_{1\le i\le d(s)} s a_i x^\star_i
\to 0,
\qquad s\to 0, 
\end{aligned}
\]
then the global MSE ratio converges to the same constant as the mode-by-mode ratio:
\[
\begin{aligned}
\frac{\mathbb E\|\hat \xb_\text{HG}(\yb)-\xb^\star\|^2}{\mathbb E\|\hat \xb_{\mathrm{MLE,P}}(\yb)-\xb^\star\|^2}
&\to
c(\epsilon)^2
\in\left(\frac{1}{4},\frac{3-\sqrt5}{2}\right),
\qquad s\to 0.
\end{aligned}
\]

\def\ps{0.45}
\begin{figure}[ht!]
\centering
    \centering
    \begin{tikzpicture}
    
    \pgfplotsset{
      every axis/.append style={
        tick label style={font=\scriptsize},
        label style={font=\scriptsize},
        xlabel style={yshift=2pt},
        ylabel style={yshift=-1pt},
        ticklabel style={inner sep=1pt},
        enlarge x limits=0.03,
        enlarge y limits=0.08,
        clip mode=individual,
      }
    }

    \begin{groupplot}[
      group style={
        group size=2 by 1,
        horizontal sep=0.85cm,
        vertical sep=0.42cm
      },
      every axis plot/.append style={mark options={solid}
      },
      width=0.5\textwidth,
      height=0.35\textwidth,
      xmode=log,
      log basis x=10,
      grid=both,
      major grid style={dashed,gray!30},
      minor grid style={dotted,gray!10},
      minor tick num=9,
      xtick={0.3,1,3,10,30,100,300,1000},
      xticklabels={0.3,1,3,10,30,100,300,1000},
      xtick pos=lower,
      ytick pos=left,
      tick align=outside,
      xlabel near ticks,
      ylabel near ticks,
      xlabel={\scriptsize Average expected number of counts $c$},
      ymode=log,
      ymin=7*1e-3,
      ymax=6e-2,
      ytick={0.01,0.02,0.03,0.05},
      yticklabels={$1.10^{-2}$,$2.10^{-2}$,$3.10^{-2}$,$5.10^{-2}$},
      ytickten={-3,-2,-1},
      legend cell align={left},
      legend style={
      at={(axis description cs:0.98,0.98)},
      anchor=north east,
      font=\scriptsize,
      draw=none,
      fill=white,
      fill opacity=0.9,
      text opacity=1,
      inner sep=2pt,
    }
    ]

\nextgroupplot[
  ylabel={\scriptsize MSE},
  title style={
    at={(rel axis cs:0.98,0.90)},
    anchor=north east,
    font=\normalsize,
    inner sep=1pt,
  },
]

\addplot+[OIvermillion, solid, line width=1.0pt,
          mark=diamond*, mark size=2.6pt,
          mark options={solid, draw=OIvermillion, fill=OIvermillion}]
  table[x=target_c,y=HG, col sep=comma]{csv_files_for_plots/shepp/mse_vs_tc-shepp.csv};
\addlegendentry{Regularized HG MAP}

\addplot+[OIgreen, densely dashdotted, line width=1.0pt,
          mark=square*, mark size=2.6pt,
          mark options={solid, draw=OIgreen, fill=OIgreen}]
  table[x=target_c,y={WLS (oracle)}, col sep=comma]{csv_files_for_plots/shepp/mse_vs_tc-shepp.csv};
\addlegendentry{PWLS (oracle)}

\addplot+[OIsky, densely dashed, line width=1.0pt,
          mark=triangle*, mark size=2.8pt,
          mark options={solid, draw=OIsky, fill=OIsky}]
  table[x=target_c,y={WLS (plug-in)}, col sep=comma]{csv_files_for_plots/shepp/mse_vs_tc-shepp.csv};
\addlegendentry{PWLS (plug-in)}

\addplot+[OIorange, densely dotted, line width=1.0pt,
          mark=triangle, mark size=3.0pt,
          mark options={solid, draw=OIorange, fill=white, line width=0.6pt}]
  table[x=target_c,y={WLS (plug-in FBP)}, col sep=comma]{csv_files_for_plots/shepp/mse_vs_tc-shepp.csv};
\addlegendentry{PWLS (plug-in FBP)}

\addplot+[OIpurple, dashed, line width=1.0pt,
          mark=triangle*, mark size=3.2pt,
          mark options={solid, rotate=180, draw=OIpurple, fill=OIpurple}]
  table[x=target_c,y=Homoscedastic, col sep=comma]{csv_files_for_plots/shepp/mse_vs_tc-shepp.csv};
\addlegendentry{Homoscedastic LS}

\addplot+[OIblue, solid, line width=1.0pt,
          mark=*, mark size=2.1pt,
            opacity=0.9,
          mark options={solid, draw=OIblue, fill=OIblue}]
  table[x=target_c,y=Poisson, col sep=comma]{csv_files_for_plots/shepp/mse_vs_tc-shepp.csv};
\addlegendentry{Poisson MAP}
\end{groupplot}
\end{tikzpicture}
\caption{MSE against the average expected number of counts per detector bin. 
For HG and all PWLS variants, the stabilization floor $\varepsilon$ was selected from $\{0.1,\,0.5,\,1.0\}$ by minimizing the tuning-set MSE at the lowest count level; the chosen $\varepsilon$ was then fixed and used for all average expected number of counts.}
\label{fig:shepp}
\end{figure}
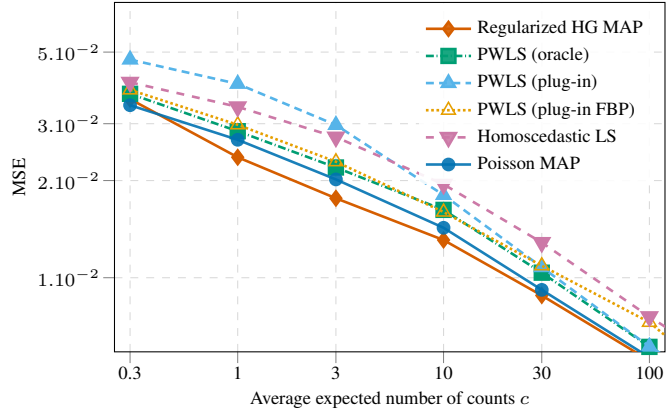

\section{Supplementary Experiment and Sensitivity Analysis}\label{app:suppexp}
Here, we provide additional experiments that support the main conclusions. We report a controlled phantom benchmark, study sensitivity to the stabilization floor $\varepsilon$ used by Gaussian-family objectives, and illustrate the U-shaped MSE curve used to tune the regularization strength. 

\subsection{Quantitative Results on the Shepp--Logan Phantom}\label{app:MSE-shepp-logan}
To remove anatomical variability from the LoDoPaB dataset, we repeat the tomography experiment on a fixed Shepp--Logan phantom. For each average expected number of counts $c$, we generate 10 independent Poisson realizations for tuning and 20 different realizations for testing, and we report mean test-set MSE on the FOV. Figure~\ref{fig:shepp} shows that the qualitative behavior matches the LoDoPaB results.

\begin{figure}[!t]
\centering
\begin{tikzpicture}
    \pgfplotsset{
      every axis/.append style={
        tick label style={font=\scriptsize},
        label style={font=\scriptsize},
        xlabel style={yshift=2pt},
        ylabel style={yshift=-1pt},
        ticklabel style={inner sep=1pt},
        enlarge x limits=0.03,
        enlarge y limits=0.08,
        clip mode=individual,
      }
    }

    \begin{groupplot}[
      group style={
        group size=4 by 1,
        horizontal sep=0.5cm,
        vertical sep=0.42cm
      },
      every axis plot/.append style={mark options={solid}},
      width=0.30\textwidth,
      height=0.24\textwidth,
      xmode=log,
      log basis x=10,
      grid=both,
      major grid style={dashed,gray!30},
      minor grid style={dotted,gray!10},
      minor tick num=9,
      xtick pos=lower,
      ytick pos=left,
      tick align=outside,
      xlabel near ticks,
      ylabel near ticks,
      xlabel={\scriptsize $c$},
      ymode=log,
      ymin=1e-3, 
      ymax=1e-1,
      ytick={1e-1,1e-2,1e-3},
      yticklabels={$10^{-1}$,$10^{-2}$,$10^{-3}$},
      legend cell align={left},
    ]

    \nextgroupplot[
      ylabel={\scriptsize MSE},
      legend to name=sharedlegend,
      legend columns=4,
      legend style={
        draw=black,
        fill=white,
        fill opacity=0.9,
        text opacity=1,
        inner sep=4pt,
        font=\scriptsize,
        column sep=0.8em,
      },
    ]
    
    \addplot+[OIblue, solid, line width=1.0pt,
              mark=*, mark size=1.8pt,
              opacity=0.9,
              mark options={solid, draw=OIblue, fill=OIblue}] 
      table[x=target_c, y=Poisson, col sep=comma] {csv_files_for_plots/lodo/eps_robustness_hg.csv};
    \addlegendentry{Poisson MAP}
    
    \addplot+[OIorange, solid, line width=1.0pt,
              mark=square*, mark size=2.0pt,
              mark options={solid, draw=OIorange, fill=OIorange}] 
      table[x=target_c, y=eps0.1, col sep=comma] {csv_files_for_plots/lodo/eps_robustness_hg.csv};
    \addlegendentry{$\varepsilon=0.1$}
    
    \addplot+[OIgreen, densely dashed, line width=1.0pt,
              mark=triangle*, mark size=2.2pt,
              mark options={solid, draw=OIgreen, fill=OIgreen}] 
      table[x=target_c, y=eps0.5, col sep=comma] {csv_files_for_plots/lodo/eps_robustness_hg.csv};
    \addlegendentry{$\varepsilon=0.5$}
    
    \addplot+[OIvermillion, densely dotted, line width=1.0pt,
              mark=diamond*, mark size=2.2pt,
              mark options={solid, draw=OIvermillion, fill=OIvermillion}] 
      table[x=target_c, y=eps1.0, col sep=comma] {csv_files_for_plots/lodo/eps_robustness_hg.csv};
    \addlegendentry{$\varepsilon=1.0$}

    \nextgroupplot[yticklabels={}]
    
    \addplot+[OIblue, solid, line width=1.0pt,
              mark=*, mark size=1.8pt,
              opacity=0.9,
              mark options={solid, draw=OIblue, fill=OIblue}] 
      table[x=target_c, y=Poisson, col sep=comma] {csv_files_for_plots/lodo/eps_robustness_wls_oracle.csv};
    
    \addplot+[OIorange, solid, line width=1.0pt,
              mark=square*, mark size=2.0pt,
              mark options={solid, draw=OIorange, fill=OIorange}] 
      table[x=target_c, y=eps0.1, col sep=comma] {csv_files_for_plots/lodo/eps_robustness_wls_oracle.csv};
    
    \addplot+[OIgreen, densely dashed, line width=1.0pt,
              mark=triangle*, mark size=2.2pt,
              mark options={solid, draw=OIgreen, fill=OIgreen}] 
      table[x=target_c, y=eps0.5, col sep=comma] {csv_files_for_plots/lodo/eps_robustness_wls_oracle.csv};
    
    \addplot+[OIvermillion, densely dotted, line width=1.0pt,
              mark=diamond*, mark size=2.2pt,
              mark options={solid, draw=OIvermillion, fill=OIvermillion}] 
      table[x=target_c, y=eps1.0, col sep=comma] {csv_files_for_plots/lodo/eps_robustness_wls_oracle.csv};

    \nextgroupplot[yticklabels={}]
    
    \addplot+[OIblue, solid, line width=1.0pt,
              mark=*, mark size=1.8pt,
              opacity=0.9,
              mark options={solid, draw=OIblue, fill=OIblue}] 
      table[x=target_c, y=Poisson, col sep=comma] {csv_files_for_plots/lodo/eps_robustness_wls_plugin.csv};
    
    \addplot+[OIorange, solid, line width=1.0pt,
              mark=square*, mark size=2.0pt,
              mark options={solid, draw=OIorange, fill=OIorange}] 
      table[x=target_c, y=eps0.1, col sep=comma] {csv_files_for_plots/lodo/eps_robustness_wls_plugin.csv};
    
    \addplot+[OIgreen, densely dashed, line width=1.0pt,
              mark=triangle*, mark size=2.2pt,
              mark options={solid, draw=OIgreen, fill=OIgreen}] 
      table[x=target_c, y=eps0.5, col sep=comma] {csv_files_for_plots/lodo/eps_robustness_wls_plugin.csv};
    
    \addplot+[OIvermillion, densely dotted, line width=1.0pt,
              mark=diamond*, mark size=2.2pt,
              mark options={solid, draw=OIvermillion, fill=OIvermillion}] 
      table[x=target_c, y=eps1.0, col sep=comma] {csv_files_for_plots/lodo/eps_robustness_wls_plugin.csv};

    \nextgroupplot[yticklabels={}]
    
    \addplot+[OIblue, solid, line width=1.0pt,
              mark=*, mark size=1.8pt,
              opacity=0.9,
              mark options={solid, draw=OIblue, fill=OIblue}] 
      table[x=target_c, y=Poisson, col sep=comma] {csv_files_for_plots/lodo/eps_robustness_wls_plugin_fbp.csv};
    
    \addplot+[OIorange, solid, line width=1.0pt,
              mark=square*, mark size=2.0pt,
              mark options={solid, draw=OIorange, fill=OIorange}] 
      table[x=target_c, y=eps0.1, col sep=comma] {csv_files_for_plots/lodo/eps_robustness_wls_plugin_fbp.csv};
    
    \addplot+[OIgreen, densely dashed, line width=1.0pt,
              mark=triangle*, mark size=2.2pt,
              mark options={solid, draw=OIgreen, fill=OIgreen}] 
      table[x=target_c, y=eps0.5, col sep=comma] {csv_files_for_plots/lodo/eps_robustness_wls_plugin_fbp.csv};
    
    \addplot+[OIvermillion, densely dotted, line width=1.0pt,
              mark=diamond*, mark size=2.2pt,
              mark options={solid, draw=OIvermillion, fill=OIvermillion}] 
      table[x=target_c, y=eps1.0, col sep=comma] {csv_files_for_plots/lodo/eps_robustness_wls_plugin_fbp.csv};

    \end{groupplot}
    
    \node[anchor=south] at ($(group c2r1.north)!0.5!(group c3r1.north) + (0,0.2cm)$)
      {\pgfplotslegendfromname{sharedlegend}};
    
    \node[anchor=north, font=\scriptsize] at ($(group c1r1.south) + (0,-0.8cm)$) {(a) HG MAP};
    \node[anchor=north, font=\scriptsize] at ($(group c2r1.south) + (0,-0.8cm)$) {(b) PWLS (oracle)};
    \node[anchor=north, font=\scriptsize] at ($(group c3r1.south) + (0,-0.8cm)$) {(c) PWLS (plug-in)};
    \node[anchor=north, font=\scriptsize] at ($(group c4r1.south) + (0,-0.8cm)$) {(d) PWLS (plug-in FBP)};
      
\end{tikzpicture}
\caption{LoDoPaB-CT: sensitivity to the stabilization floor $\varepsilon$. Each panel shows test MSE versus average expected counts $c$ for a fixed method, comparing $\varepsilon\in\{0.1,0.5,1.0\}$ to the Poisson MAP baseline. The results from the Shepp--Logan phantom experiments exhibits the same qualitative behavior.}
\label{fig:eps_robustness}
\end{figure}
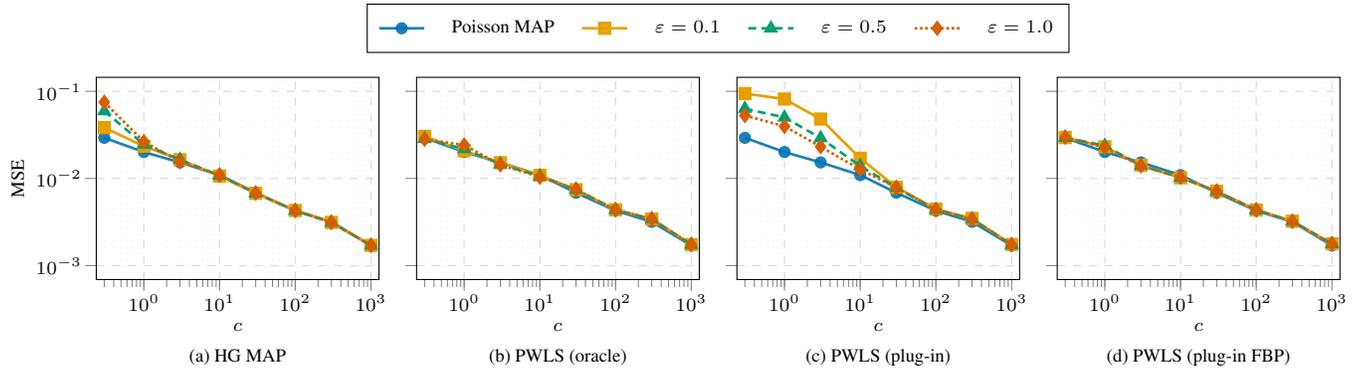

\def\ps{0.45}
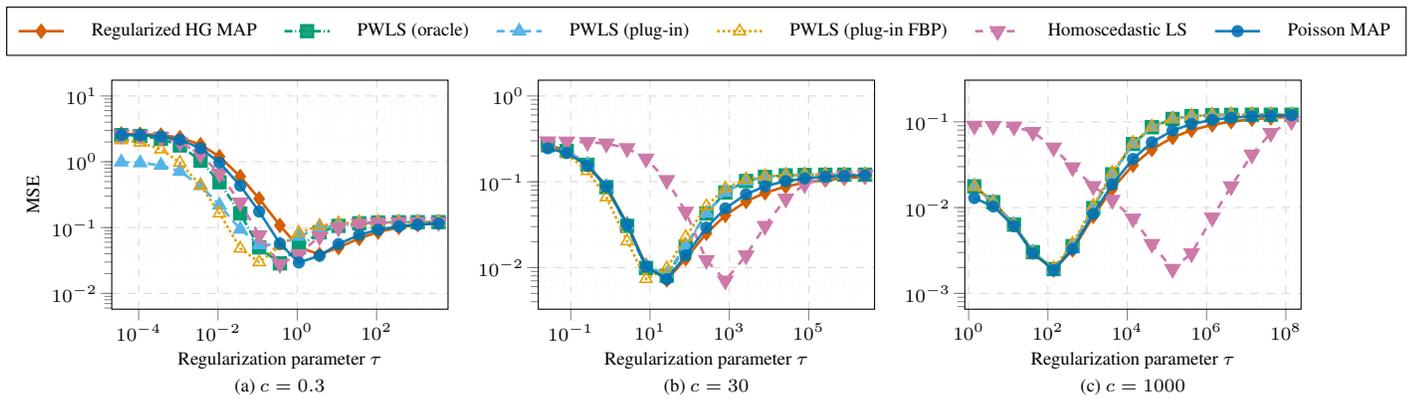
\begin{figure}[!t]
\centering
\begin{tikzpicture}
    
    \pgfplotsset{
      every axis/.append style={
        tick label style={font=\scriptsize},
        label style={font=\scriptsize},
        xlabel style={yshift=2pt},
        ylabel style={yshift=-1pt},
        ticklabel style={inner sep=1pt},
        enlarge x limits=0.03,
        enlarge y limits=0.08,
        clip mode=individual,
      }
    }

    \begin{groupplot}[
      group style={
        group size=3 by 1,
        horizontal sep=1.2cm,
        vertical sep=0.42cm
      },
      every axis plot/.append style={mark options={solid}},
      width=0.34\textwidth,
      height=0.26\textwidth,
      xmode=log,
      log basis x=10,
      grid=both,
      major grid style={dashed,gray!30},
      minor grid style={dotted,gray!10},
      minor tick num=9,
      xtick pos=lower,
      ytick pos=left,
      tick align=outside,
      xlabel near ticks,
      ylabel near ticks,
      xlabel={\scriptsize Regularization parameter $\tau$},
      ymode=log,
      legend cell align={left},
    ]

    \nextgroupplot[
      ylabel={\scriptsize MSE},
      ymin=1e-2,
      ymax=1e1,
      legend to name=mselegend,
      legend columns=6,
      legend style={
        draw=black,
        fill=white,
        fill opacity=0.9,
        text opacity=1,
        inner sep=4pt,
        font=\scriptsize,
        column sep=0.8em,
      },
    ]

    \addplot+[OIvermillion, solid, line width=1.0pt,
              mark=diamond*, mark size=2.2pt,
              mark options={solid, draw=OIvermillion, fill=OIvermillion}]
      table[x=alpha,y=HG, col sep=comma]{csv_files_for_plots/lodo/tune_mse_vs_alpha_tc_0.3.csv};
    \addlegendentry{Regularized HG MAP}

    \addplot+[OIgreen, densely dashdotted, line width=1.0pt,
              mark=square*, mark size=2.2pt,
              mark options={solid, draw=OIgreen, fill=OIgreen}]
      table[x=alpha,y={WLS (oracle)}, col sep=comma]{csv_files_for_plots/lodo/tune_mse_vs_alpha_tc_0.3.csv};
    \addlegendentry{PWLS (oracle)}

    \addplot+[OIsky, densely dashed, line width=1.0pt,
              mark=triangle*, mark size=2.4pt,
              mark options={solid, draw=OIsky, fill=OIsky}]
      table[x=alpha,y={WLS (plug-in)}, col sep=comma]{csv_files_for_plots/lodo/tune_mse_vs_alpha_tc_0.3.csv};
    \addlegendentry{PWLS (plug-in)}

    \addplot+[OIorange, densely dotted, line width=1.0pt,
              mark=triangle, mark size=2.6pt,
              mark options={solid, draw=OIorange, fill=white, line width=0.6pt}]
      table[x=alpha,y={WLS (plug-in FBP)}, col sep=comma]{csv_files_for_plots/lodo/tune_mse_vs_alpha_tc_0.3.csv};
    \addlegendentry{PWLS (plug-in FBP)}

    \addplot+[OIpurple, dashed, line width=1.0pt,
              mark=triangle*, mark size=2.8pt,
              mark options={solid, rotate=180, draw=OIpurple, fill=OIpurple}]
      table[x=alpha,y=Homoscedastic, col sep=comma]{csv_files_for_plots/lodo/tune_mse_vs_alpha_tc_0.3.csv};
    \addlegendentry{Homoscedastic LS}

    \addplot+[OIblue, solid, line width=1.0pt,
              mark=*, mark size=1.8pt,
              opacity=0.9,
              mark options={solid, draw=OIblue, fill=OIblue}]
      table[x=alpha,y=Poisson, col sep=comma]{csv_files_for_plots/lodo/tune_mse_vs_alpha_tc_0.3.csv};
    \addlegendentry{Poisson MAP}

    \nextgroupplot[
      ymin=5e-3,
      ymax=1e0,
    ]

    \addplot+[OIvermillion, solid, line width=1.0pt,
              mark=diamond*, mark size=2.2pt,
              mark options={solid, draw=OIvermillion, fill=OIvermillion}]
      table[x=alpha,y=HG, col sep=comma]{csv_files_for_plots/lodo/tune_mse_vs_alpha_tc_30.csv};

    \addplot+[OIgreen, densely dashdotted, line width=1.0pt,
              mark=square*, mark size=2.2pt,
              mark options={solid, draw=OIgreen, fill=OIgreen}]
      table[x=alpha,y={WLS (oracle)}, col sep=comma]{csv_files_for_plots/lodo/tune_mse_vs_alpha_tc_30.csv};

    \addplot+[OIsky, densely dashed, line width=1.0pt,
              mark=triangle*, mark size=2.4pt,
              mark options={solid, draw=OIsky, fill=OIsky}]
      table[x=alpha,y={WLS (plug-in)}, col sep=comma]{csv_files_for_plots/lodo/tune_mse_vs_alpha_tc_30.csv};

    \addplot+[OIorange, densely dotted, line width=1.0pt,
              mark=triangle, mark size=2.6pt,
              mark options={solid, draw=OIorange, fill=white, line width=0.6pt}]
      table[x=alpha,y={WLS (plug-in FBP)}, col sep=comma]{csv_files_for_plots/lodo/tune_mse_vs_alpha_tc_30.csv};

    \addplot+[OIpurple, dashed, line width=1.0pt,
              mark=triangle*, mark size=2.8pt,
              mark options={solid, rotate=180, draw=OIpurple, fill=OIpurple}]
      table[x=alpha,y=Homoscedastic, col sep=comma]{csv_files_for_plots/lodo/tune_mse_vs_alpha_tc_30.csv};

    \addplot+[OIblue, solid, line width=1.0pt,
              mark=*, mark size=1.8pt,
              opacity=0.9,
              mark options={solid, draw=OIblue, fill=OIblue}]
      table[x=alpha,y=Poisson, col sep=comma]{csv_files_for_plots/lodo/tune_mse_vs_alpha_tc_30.csv};

    \nextgroupplot[
      ymin=1e-3,
      ymax=0.2,
    ]

    \addplot+[OIvermillion, solid, line width=1.0pt,
              mark=diamond*, mark size=2.2pt,
              mark options={solid, draw=OIvermillion, fill=OIvermillion}]
      table[x=alpha,y=HG, col sep=comma]{csv_files_for_plots/lodo/tune_mse_vs_alpha_tc_1000.csv};

    \addplot+[OIgreen, densely dashdotted, line width=1.0pt,
              mark=square*, mark size=2.2pt,
              mark options={solid, draw=OIgreen, fill=OIgreen}]
      table[x=alpha,y={WLS (oracle)}, col sep=comma]{csv_files_for_plots/lodo/tune_mse_vs_alpha_tc_1000.csv};

    \addplot+[OIsky, densely dashed, line width=1.0pt,
              mark=triangle*, mark size=2.4pt,
              mark options={solid, draw=OIsky, fill=OIsky}]
      table[x=alpha,y={WLS (plug-in)}, col sep=comma]{csv_files_for_plots/lodo/tune_mse_vs_alpha_tc_1000.csv};

    \addplot+[OIorange, densely dotted, line width=1.0pt,
              mark=triangle, mark size=2.6pt,
              mark options={solid, draw=OIorange, fill=white, line width=0.6pt}]
      table[x=alpha,y={WLS (plug-in FBP)}, col sep=comma]{csv_files_for_plots/lodo/tune_mse_vs_alpha_tc_1000.csv};

    \addplot+[OIpurple, dashed, line width=1.0pt,
              mark=triangle*, mark size=2.8pt,
              mark options={solid, rotate=180, draw=OIpurple, fill=OIpurple}]
      table[x=alpha,y=Homoscedastic, col sep=comma]{csv_files_for_plots/lodo/tune_mse_vs_alpha_tc_1000.csv};

    \addplot+[OIblue, solid, line width=1.0pt,
              mark=*, mark size=1.8pt,
              opacity=0.9,
              mark options={solid, draw=OIblue, fill=OIblue}]
      table[x=alpha,y=Poisson, col sep=comma]{csv_files_for_plots/lodo/tune_mse_vs_alpha_tc_1000.csv};

    \end{groupplot}
    
    \node[anchor=south] at ($(group c2r1.north) + (0,0.2cm)$)
      {\pgfplotslegendfromname{mselegend}};
    
    \node[anchor=north, font=\scriptsize] at ($(group c1r1.south) + (0,-0.8cm)$) {(a) $c = 0.3$};
    \node[anchor=north, font=\scriptsize] at ($(group c2r1.south) + (0,-0.8cm)$) {(b) $c = 30$};
    \node[anchor=north, font=\scriptsize] at ($(group c3r1.south) + (0,-0.8cm)$) {(c) $c = 1000$};
      
\end{tikzpicture}
\caption{LoDoPaB-CT: MSE as a function of the Tikhonov regularization parameter $\tau$ for three representative average expected counts $c$. We observe qualitatively similar regularization curves on the Shepp--Logan phantom.}
\label{fig:mse_vs_alpha}
\end{figure}

\subsection{Sensitivity to the stabilization floor $\varepsilon$}\label{app:eps_sensitivity}
The heteroscedastic Gaussian and PWLS objective depend on a stabilization floor $\varepsilon$ that appears in the denominator of the variance or weights. To assess robustness, we run each method with $\varepsilon \in \{0.1, 0.5, 1.0\}$ across all count levels. We report the per-$\varepsilon$ MSE curves in Figure~\ref{fig:eps_robustness}. PWLS with oracle weights and PWLS with plug-in FBP weights are essentially unchanged as 
$\varepsilon$ varies, whereas HG MAP and PWLS with plug-in weights show noticeable sensitivity at the lowest count levels. This indicates that the choice of $\varepsilon$ can matter in the low-dose regime.

\subsection{MSE versus Tikhonov regularization strength} \label{app:u_shaped}
Figure~\ref{fig:mse_vs_alpha} plots MSE as a function of the regularization parameter $\tau$ for three representative count levels $c$. The curves exhibit the expected U-shape: too little regularization leads to noise amplification, while too much regularization introduces bias through oversmoothing.

\end{document}